\newtheorem{corollary}{Corollary}
\newtheorem{remark}{Remark}
\newcounter{MYtempeqncnt}  
\newtheorem{theorem}{Theorem}
\begin{document}

\title{Physical Layer Security in AmBC-NOMA Networks with Random Eavesdroppers}
\author{Xinyue~Pei, Xingwei~Wang, Min~Huang,  Yingyang~Chen,~\IEEEmembership{Senior Member,~IEEE},\\
	Xiaofan~Li,~\IEEEmembership{Senior Member,~IEEE}, and Theodoros A. Tsiftsis,~\IEEEmembership{Senior Member,~IEEE}
	
	\thanks{X. Pei and X. Wang are with School of Computer Science and Engineering, Northeastern University, Shenyang 110819, China (e-mail:
		peixy@cse.neu.edu.cn, wangxw@mail.neu.edu.cn).}
		\thanks{M. Huang is with the College of Information Science
			and Engineering, Northeastern University, Shenyang 110819, China (e-mail:
			mhuang@mail.neu.edu.cn).}
	\thanks{Y. Chen is with the 
		College of Information Science and Technology, Jinan University, Guangzhou 510632, China (e-mail:
		chenyy@jnu.edu.cn). }
	\thanks{X. Li is with the School of Intelligent Systems Science and
		Engineering, Jinan University, Zhuhai 519070, China (e-mail: lixiaofan@jnu.edu.cn).}
	\thanks{T. A. Tsiftsis is with Department of Informatics Telecommunications, University of Thessaly, 
		Lamia 35100, Greece, and also with the Department of Electrical and Electronic Engineering, 
		University of Nottingham Ningbo China, Ningbo 315100, China (e-mail: tsiftsis@uth.gr).}
}
\markboth{SUBMITTED TO IEEE INTERNET OF THINGS JOURNAL}{}
\maketitle

\begin{abstract}	
In this work, we investigate the physical layer security (PLS) of ambient backscatter communication non-orthogonal multiple access (AmBC-NOMA)   networks where non-colluding eavesdroppers (Eves) are randomly distributed. In the proposed system, a base station (BS) transmits a superimposed signal to a typical
NOMA user pair, while a backscatter device~(BD)  simultaneously transmits its unique signal by reflecting and modulating the BS's signal. Meanwhile, Eves passively attempt to wiretap the ongoing transmissions.  Notably, the number and locations of Eves are unknown, posing a substantial security threat to the system. To address this challenge, the BS injects artificial noise (AN) to mislead the Eves, and a protected zone is employed to create an Eve-exclusion area around the BS. 
Theoretical expressions for outage probability (OP) and intercept probability (IP)  are provided to evaluate the system's reliability-security trade-off. Asymptotic behavior at high signal-to-noise ratio (SNR) is further explored, including the derivation of diversity orders for the OP. Numerical results validate the analytical findings  through extensive simulations, demonstrating that both the AN injection and protected zone can effectively enhance PLS. Furthermore, analysis and insights of different key parameters, including transmit SNR, reflection efficiency at the BD, power allocation coefficient, power fraction allocated to desired signal, Eve-exclusion area radius, Eve distribution density, and backscattered AN cancellation efficiency, on OP and IP are also provided.

\end{abstract}

\begin{IEEEkeywords}
Non-orthogonal multiple access (NOMA), backscatter communications, physical layer security (PLS), artificial noise (AN), stochastic geometry.
\end{IEEEkeywords}
\maketitle
\section{Introduction}
Internet of Things (IoT) has become a core element of sixth-generation (6G) mobile networks \cite{nguyen20216g}.  However, the explosive growth of IoT devices presents significant challenges, including the need for massive connectivity and management of spectrum and energy constraints \cite{cisco2020cisco}. 
In this regard, traditional orthogonal multiple access (OMA) systems, which assign unique time, frequency, or code resources to each user, face the risk of overload.  Non-orthogonal multiple access (NOMA), on the other hand, uses techniques like superposition coding at the transmitter and successive interference cancellation (SIC) at the receiver to improve spectral efficiency (SE) and support massive connectivity \cite{pei2022next}. Therefore, integrating NOMA with IoT is emerging as a promising direction for future 6G networks~\cite{Xin2021iot}.

Beyond enhancing SE, maximizing energy efficiency (EE) is another crucial challenge in IoT, particularly in green IoT applications \cite{arshad2017green}. Nowadays, ambient backscatter communication (AmBC) has become a promising solution for green IoT, thanks to its low energy consumption and ability to avoid occupying  dedicated spectrum \cite{liu2013ambient, zhang2018spectrum}. In AmBC systems, a backscatter device (BD) reflects signals from an ambient radio frequency (RF) source to the receiver, while superimposing its own signal onto the RF signals in the environment (such as digital television broadcasting or cellular network signals) without requiring any oscillatory circuitry. In this manner, AmBC enables spectrum- and energy-efficient communication~\cite{liu2013ambient}.
Clearly, combining NOMA with AmBC can utilize their respective advantages, enabling the realization of battery-free IoT networks.  Specifically, in AmBC-NOMA, the BD's additional signal is decoded only at the closer user via SIC, while acting as interference for the farther user \cite{zhang2019backscatter}.

Although AmBC-NOMA networks present notable advantages, their broadcast nature exposes them to eavesdropping and malicious attacks \cite{li2021physical}. Traditional key-based encryption, however, may be impractical due to challenges such as authentication latency, energy consumption, and key management, particularly in high-density environments \cite{wang2019physical}. 
To address these risks, physical layer security (PLS), originally proposed by Wyner in 1975,  is considered a promising solution for enhancing communication security~\cite{wyner1975wire}. Unlike key-based encryption, PLS exploits the inherent randomness of wireless channels to safeguard data at the physical layer, avoiding complex cryptographic operations \cite{dong2009improving,chen2016secure}.

Specifically, Wyner’s pioneering work established secrecy capacity, identified as the gap between the main and wiretap channel capacities, as a critical metric in PLS. A positive secrecy capacity ensures secure communication, while a negative one indicates a high risk of successful eavesdropping  \cite{leung1978gaussian}. To quantify this risk, intercept probability (IP) has emerged as a core metric for security evaluation \cite{zou2015improving}. It is worth noting that a trade-off between reliability and security is common \cite{zou2013optimal}, as increasing transmission power to reduce the main link’s outage probability (OP) may unintentionally raise the risk of eavesdropping.  Consequently, balancing IP and OP has been extensively studied across various networks, including AmBC-NOMA systems \cite{li2021hardware}.

Various techniques, including the injection of artificial noise~(AN) \cite{negi2005secret} and cooperative jamming (CJ) \cite{tekin2008general}, protect legitimate communication from eavesdropping to further enhance PLS.
These techniques aim to degrade the reception quality of potential eavesdroppers (Eves) without affecting legitimate users. Among them, AN is particularly effective. By deliberately adding noise to the transmitted signal, AN can disrupt the decoding capabilities of Eves without requiring additional hardware \cite{goel2008guaranteeing}. This property makes AN particularly suitable for low-complexity, low-power AmBC applications~\cite{li2021hardware,Li2022secureV2P}. 

\subsection{Related Works}
In this subsection, we review the related works focusing on PLS design in both AmBC and AmBC-NOMA systems.
\subsubsection{PLS design in AmBC network}
Given the significance of studying PLS in AmBC networks, this research topic sparked widespread recent interest. Specifically, in \cite{li2021physical}, the authors proposed a novel cognitive radio (CR)-based AmBC framework and examined the considered system's OP and IP. In \cite{muratkar2021physical}, a practical AmBC network with mobile nodes was examined, focusing on the secrecy performance under imperfect channel estimation. Then in \cite{wang2022physical}, the asymptotic outage and intercept behaviors for two-way AmBC networks were comprehensively analyzed.
Furthermore, in \cite{li2023cooperativephysical}, theoretical expressions for secrecy energy efficiency (SEE) as well as secrecy outage probability (SOP)  were derived in a cooperative AmBC network utilizing a decode-and-forward relay.
In addition,~\cite{chu2024countering} proposed a novel message-splitting technique to establish an anti-eavesdropping framework for AmBC networks. 
In scenarios with multiple BDs, \cite{liu2021secrecy} introduced an approach for selecting the optimal tag to improve secrecy performance and provided a closed-form expression for SOP.  Building upon this, \cite{liu2022secrecy} analyzed ergodic secrecy capacity~(ESC) along with SOP in self-powered networks with multiple tags, considering Nakagami-$m$ fading channels.
Regarding intelligent transportation systems, \cite{Li2022secureV2P} addressed the PLS of AmBC-assisted vehicle-to-pedestrian networks through utilizing AN injection. In \cite{jia2023secrecy}, the authors designed secure communications in an AmBC-based intelligent transportation system in conjunction with a passive Eve and a multiple-antenna RF source, employing CJ. Finally, \cite{jia2024secrecy} investigated the SOP of unmanned aerial vehicle-assisted AmBC systems employing CJ.


\subsubsection{PLS design in AmBC-NOMA} 
The authors in \cite{khan2020secure}  proposed an optimization framework to improve communication security in multi-cell AmBC-NOMA networks. In the referenced paper 	 \cite{li2020secrecy}, the influence of in-phase and quadrature imbalance (IQI) on the secrecy performance of AmBC-NOMA networks was studied. Based on this, \cite{li2021hardware}  investigated AmBC-NOMA networks under practical conditions, i.e., residual hardware impairments, imperfect SIC and channel state information (CSI), and evaluated their combined impact on system reliability and security.
	Another work \cite{li2021cognitive} focused on the reliability and security of CR-assisted AmBC-NOMA networks under IQI, specifically for Internet of Vehicle-based maritime transportation systems. Subsequently, \cite{zhao2024performance} explored how tag sensitivity and IQI influence the reliability and security of cognitive AmBC-NOMA systems. Furthermore, the reliability and security of AmBC-NOMA systems under IQI assumptions and Nakagami-$m$ fading were comprehensively studied in \cite{sun2024secrecy}.
	Finally, \cite{chrysologou2024reliability} investigated the  inherent trade-off between reliability and security for uplink AmBC-NOMA systems.


\begin{table}[!t]
	\begin{Center}					
		\caption{Comparison between our work with the state-of-the-art.}  
		\label{Literature_Review}
		\scalebox{0.87}{
			\begin{tabular}{|c|c|c|c|c|}
				\hline\hline
				Ref                                                              & AmBC                                     & NOMA                  & \begin{tabular}[c]{@{}c@{}}Random\\ Eves\end{tabular}  & Metrics                                                                \\ \hline \hline
				\multicolumn{1}{|c|}{\cite{muratkar2021physical,liu2021secrecy,jia2023secrecy,jia2024secrecy}}                  & Yes                                                                      & No               & No                                                                                                               & SOP                                                                   \\ \hline
				\multicolumn{1}{|c|}{\cite{wang2022physical,li2021physical,Li2022secureV2P,chu2024countering}}                  & Yes                                                                      & No                & No                                                                                                               & OP IP                                                                                                                                     \\ \hline  
				\multicolumn{1}{|c|}{\cite{li2023cooperativephysical}}                  & Yes                                                                      & No                & No                                                                                                               & SOP SEE                                                                                                                                  \\ \hline
				\multicolumn{1}{|c|}{\cite{liu2022secrecy}}                  & Yes                                                                      & No                 & No                                                                                                               & SOP ESC                                                                  \\ \hline
				{\cite{khan2020secure}}                                                                                         & Yes                                                                      & Yes                  & No                                                                                                                & ESC                  \\ \hline 
				\multicolumn{1}{|c|}{{\cite{li2020secrecy,li2021cognitive,sun2024secrecy,li2021hardware,zhao2024performance,chrysologou2024reliability}}}                                                          &  Yes                                                                   & Yes                  & No                                                                                                             & OP IP                                                                                                                                                                                                                                                                                                                   \\ \hline
				Our work                                                                        & Yes                                                                      & Yes                & Yes                                                                                                             & OP IP                                                                   \\ \hline\hline
		\end{tabular}}
	\end{Center}
\end{table}	

\subsection{Motivations and Contributions}
\subsubsection{Motivations and Novelty}
It is worth noting that prior works have relied on predefined locations and numbers of Eves. However, such assumptions may not hold for passive Eves, which only overhear data transmissions without actively interfering. In such cases, identifying the specific positions and numbers of Eves becomes extremely challenging, especially when they are distant from the base station (BS). Therefore, the assumptions utilized in prior works may become unrealistic in passive eavesdropping scenarios~\cite{shiu2011physical}.

Without prior information about the Eves, we can treat
them as completely randomly distributed in the region \cite{pinto2011secure}. 
From this perspective,  stochastic geometry has proven to be an effective tool for modeling the random locations and numbers of unknown Eves \cite{chiu2013stochastic}.  Specifically, it can statistically model both small-scale and large-scale fading in wiretap channels~\cite{liu2017enhancing}. In particular, among all the various stochastic geometry methods used for modeling Eves, one widely adopted model is the homogeneous Poisson point process (HPPP)  \cite{pei2024secrecy}.\footnote{This arises from two key characteristics of the HPPP distribution \cite{pinto2011secure}: 1) it maximizes entropy among all homogeneous point processes, and 2) the probability density function (PDF) of a node's location is uniformly distributed within a given region.}
However, to the best of the authors' knowledge, no prior work on PLS in AmBC-NOMA networks has considered random Eves. As a result, the impact of key system parameters remains unclear within the stochastic geometry framework, arising the need to investigate the secrecy performance under this assumption.


\subsubsection{Contributions}
To address the  above research  gap, we explore the secure transmission of an AmBC-NOMA system in the presence of random Eves.
In the proposed system, it is assumed that the BS has the capability to detect and eliminate  the impact of nearby Eves by  using devices like metal detectors and X-ray scanners prior to transmission \cite{zhang2013enhancing,romero2013phy,liu2015ergodic,chae2014enhanced}.
Based on this assumption, a protected zone is established as a disc region around the BS, free of Eves. The key contributions of this paper are outlined below: 
\begin{itemize}
	\item 
	We analyze the secrecy performance of a novel secure AmBC-NOMA system  in the presence of HPPP-distributed Eves.
To further enhance PLS, a protected zone, namely, Eve-exclusion area is introduced. AN injection is also employed to improve security performance.

	\item We derive theoretical expressions for both the OP of  legitimate link and IP of  wiretap link. Based on the aforementioned expressions, we study
	both outage and interception behaviors to jointly evaluate the reliability and security of the proposed system.

\item To provide more insights, we also obtain the asymptotic expressions for both OP and IP at high signal-to-noise ratio (SNR), and attain the diversity order of the OP in this regime. The obtained results reveal the presence of error floors in the OP.
	\item  The accuracy of the derived results is validated through simulations. Additionally, numerical results explore the impact of various key parameters. The advantages of using AN and the protected zone are then demonstrated. Finally, the trade-off between reliability and security is observed.
\end{itemize}	
\subsection{Outline and Notations}
	The rest of this paper is organized as follows. Section~II provides an overview of the proposed system model. Section \uppercase\expandafter{\romannumeral3} provides the OP analysis.  Section \uppercase\expandafter{\romannumeral4} provides the IP analysis. Section \uppercase\expandafter{\romannumeral5} analyzes the numerical results. Finally, the conclusion of this work is presented in Section~\uppercase\expandafter{\romannumeral6}.  
	
\textit{Notations}: The operations $\Pr(\cdot)$, $|\cdot|$,  and $\mathbb{E}\{\cdot\}$ denote the probability, the absolute value,  and the expectation,
	respectively. $F_X(\cdot)$ and $f_X(\cdot)$ respectively denote the cumulative distribution function (CDF) and PDF of a random variable $X$.
	A complex Gaussian distribution $Y$ with zero mean and variance $\Omega$ is represented by $Y\sim\mathcal{CN}(0,\Omega)$, with $F_{|Y|^2}(x)=1-\exp(-\frac{x}{\Omega})$ and $f_{|Y|^2}(x)=\frac{1}{\Omega}\exp(-\frac{x}{\Omega})$.   $\Gamma(s,x)=\int_{x}^{\infty}t^{s-1}\exp(-t)dt$ is  the upper incomplete gamma function, $K_v(\cdot)$ is the modified bessel function of the second kind,
	 and $\text{Ei}(\cdot)$ is the exponential integral function  \cite{gradshteyn2014table}.
\section{System Model}
In this section, we first introduce  the proposed system model with unknown Eves,  then present the transmission model, and finally describe the signal decoding process along with the mathematical expression for the signal-to-interference plus noise ratio (SINR).
\subsection{System Overview}
In this work, we consider an AmBC-NOMA scenario as depicted in Fig. \ref{system_model}, which consists of a BS, a BD, a near user (denoted by $U_N$),  a far  user (denoted by $U_F$),\footnote{Due to the strong interference between users, it may be challenging to jointly apply NOMA to all users in practice. A low-complexity solution is to divide users into orthogonal pairs,  with NOMA applied within each pair. Consequently, in this work, we focus on a typical NOMA user pair to explore the trade-off between reliability and security.
	
} and multiple \textit{non-colluding passive} Eves. All the nodes are equipped with a single antenna.
The main components of the system are specified as follows:
\begin{figure}[!t]
	\centering
	\includegraphics[width=3.4in]{./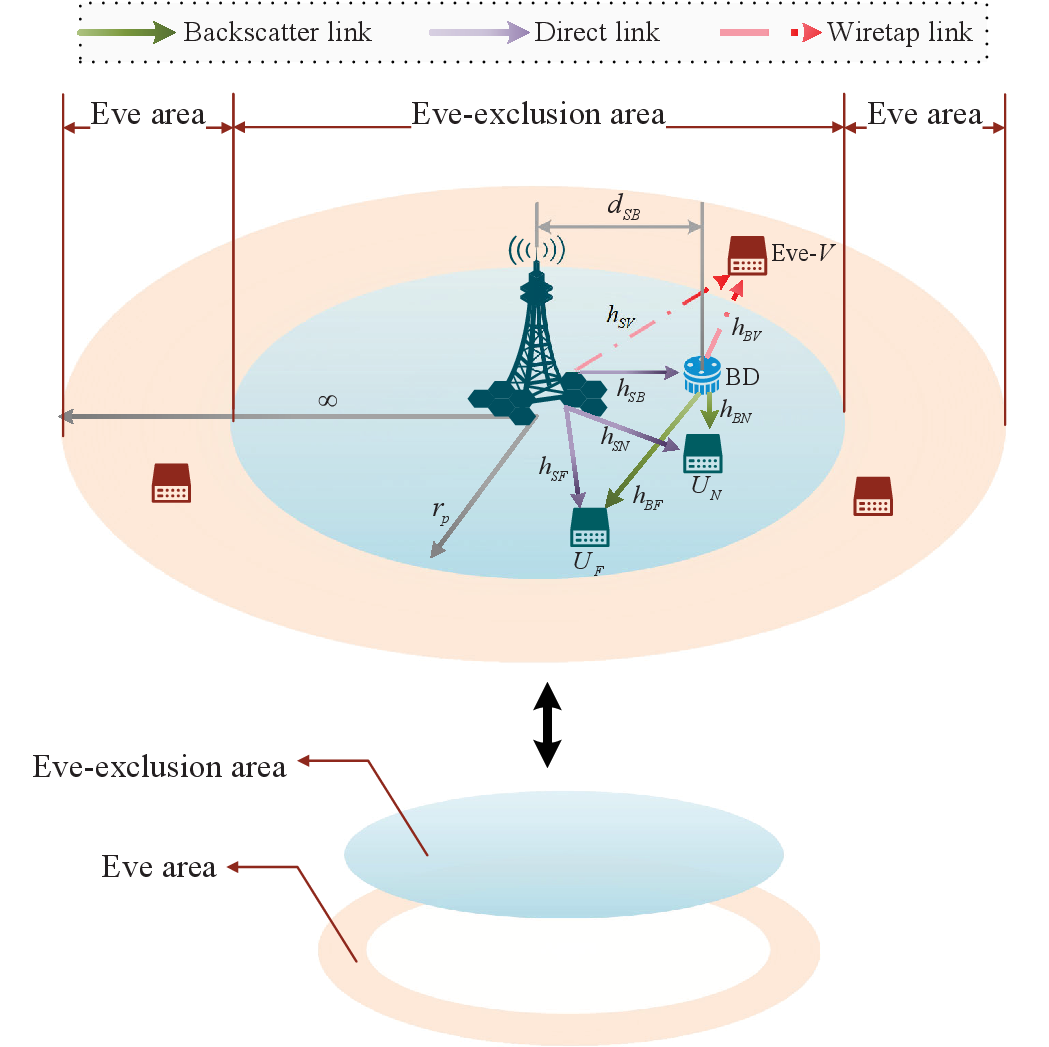}
	\caption{System model.} 	
	\label{system_model}
\end{figure}
 \begin{itemize}
\item \emph{BS and legitimate users:} 
In this work, we consider a single-cell scenario, where a BS communicates with legitimate users via the NOMA protocol. Without loss of generality, we assume that the BS  is positioned at the center, with legitimate users located  at fixed locations. Specifically, the BS employs AN injection to  prevent eavesdropping.
\item \emph{Eve-exclusion area:} We assume that the Eve-exclusion area is a disc region with radius $r_p$, entirely free of Eves. To further enhance PLS, legitimate users should be positioned within this region. 

 
	 	\item \emph{Eves:} While legitimate transceivers are communicating, multiple \textit{non-colluding passive} Eves are silently listening, aiming to degrade or even interrupt the ongoing communication. Recall that the Eves are unknown, we model their spatial distribution using an  HPPP, denoted by $\Phi_{\text{E}}$ with density $\lambda_e$. Then the distribution area for Eves (referred to as the Eve area) can be represented as an infinite two-dimensional plane, excluding the Eve-exclusion area. 
  	\item \emph{BD:}  To further enhance system performance, we place a BD at a fixed distance $d_{SB}$ from the BS. Given that the BD can be considered as a cooperative communication device, we assume that the BD is located near the BS. Additionally, since the BS can be equipped with high-performance detection equipment, we assume that $d_{SB}\ll r_p$. Consequently, the distance from the $V$-th Eve (denoted by $\text{Eve-}V$, $V\in\Phi_{\text{E}}$) to the BS (defined as $d_{SV}$) is approximately equal to the distance from $\text{Eve-}V$ to the BD (defined as $d_{BV}$), i.e., $d_{SV}\approx d_{BV}$.	
  		\footnote{This approximation can be derived using the cosine law. Given that \(d_{SV} \geq r_p \gg d_{SB}\), we have 
  			$	d_{BV} = \sqrt{d_{SV}^2 + d_{SB}^2 - 2d_{SV} d_{SB} \cos(\angle VSB)}$.  Then we can conclude that \(d_{BV} \approx d_{SV}\). This is a general assumption in cooperative networks and has been widely used in the literature \cite{liu2016cooperative,zhang2024noma}. }
  \end{itemize}

\subsection{Transmission Model}
It is assumed that all channels are subject to Rayleigh fading, and the channel coefficients from  BS to $U_N$, $U_F$, BD, and $\text{Eve-}V$ are respectively denoted by $h_{SN}$, $h_{SF}$, $h_{SB}$, and $h_{SV}$. Similarly, the channel coefficients from BD to $U_N$, $U_F$, and $\text{Eve-}V$ are respectively denoted by $h_{BN}$, $h_{BF}$, and $h_{BV}$. Accordingly,
	the channel coefficient $h_{\iota}$ ($\iota\in\{SN,SF,SB,SV,BN,BF,BV\}$)  can be expressed as
	$h_{\iota}=\sqrt{\lambda_{\iota}}g_{\iota}$,
	where $\lambda_{\iota}\triangleq d_{\iota}^{-\alpha}$ represents the large-scale fading; $\alpha$ is the path loss exponent; $d_{\iota}$ is the distance of the corresponding link;  $g_{\iota}\sim\mathcal{CN}(0,1)$ is a normalized complex Gaussian random variable.

To improve the PLS levels of the system, we herein inject an AN message $z$ with $\mathbb{E}\{|z|^2\}=1$ at the BS. Then, utilizing the NOMA protocol, the BS transmits a superimposed signal
\begin{align}\label{transmitted_signal}
s=\underbrace{\sqrt{\theta Pa_N}s_N +\sqrt{\theta Pa_F}s_F}_{\text{desired signal}} +\underbrace{\sqrt{(1-\theta)P} z}_{\text{AN signal}},
\end{align} to $U_N$ and $U_F$, where $P$ is the total transmit power; 
$\theta$ is the power fraction allocated to desired signal, which should satisfy $0.5 \leq \theta < 1$ to ensure the quality of desired signal; 
$s_N$ and $s_F$ respectively denote the intended signals for $U_N$ and $U_F$, with $\mathbb{E}\{|s_N|^2\}=\mathbb{E}\{|s_F|^2\}=1$; $a_N$ and $a_F$ are respectively the power coefficients allocated to $s_N$ and $s_F$ with $a_N+a_F=1$ and $a_F>a_N>0$. 

At the same time, the BD backscatters the BS’s signal $s$ to users with its own message $s_C$, which is intended for $U_N$, with $\mathbb{E}\{|s_C|^2\}=1$. It is worth noting that the additive noise at BD can be disregarded, as its integrated circuit only consists  of  passive components~\cite{long2019symbiotic}.  Hence, 
the signals received by $U_N$, $U_F$, and $\text{Eve-}V$ are respectively
given by
	\begin{align}\label{yN}
		y_N=\underbrace{h_{SN}s}_{\text{direct link}}+\underbrace{h_{SB}h_{BN}\beta ss_C}_{\text{backscatter link}}+n_N,
	\end{align}
	\begin{align}\label{yF}
		y_F=h_{SF}s+h_{SB}h_{BF}\beta ss_C+n_F,
	\end{align}
and 
\begin{align}\label{yV}
y_V=h_{SV}s+h_{SB}h_{BV}\beta ss_C+n_V,
\end{align}
where $|\beta|\leq 1$ denotes the reflection efficiency of BD; $n_N$, $n_F$, and $n_V\sim\mathcal{CN}(0,\sigma^2)$ respectively represent the additive white
Gaussian noises (AWGNs) at $U_N$, $U_F$, and $\text{Eve-}V$.

\begin{figure*}[!t]
	\normalsize 
	\setcounter{MYtempeqncnt}{\value{equation}} 
	\setcounter{equation}{11}
	\begin{align}\label{exp_of_OPC}
		P_{{\rm out},C}\!=\!\begin{cases}
			1\!+\!\mathcal{A}_1\exp\left(\mathcal{A}_1\!-\!\mathcal{B}_1\right){\rm Ei}\left(-\mathcal{A}_1\right)\!+\!\mathcal{A}_2\sum_{i=1}^{n}\sqrt{1-t_i^2}\exp(-\mathcal{B}_2)K_0\left(\mathcal{C}_2\right),&\!\! \frac{a_F}{a_N}>\gamma_{N,th}^{s_F}\,\text{and}\, \gamma_{N,th}^{s_C}<\frac{\theta}{\eta(1-\theta)},\\
			1,& \!\!\text{otherwise},
		\end{cases}			
	\end{align}
	\setcounter{equation}{\value{MYtempeqncnt}}
	\hrulefill 
\end{figure*}
 \subsection{Signal Decoding Process}
Please note that AN $z$ is generated via a pseudo-random sequence, which is shared between the legitimate transceivers  \cite{lv2019secure,li2019secrecy}. As shown in (\ref{yN}) and (\ref{yF}), the received signals consist of direct and backscatter links. Since the legitimate receivers have prior knowledge of $z$ and can obtain  the CSI before data transmission, they can achieve complete AN cancellation in the direct link \cite{lv2017improving,li2022improving,gu2023act}. However, the backscattered AN experiences time and phase shifts due to BD processing, making it challenging for users to fully recover $z$ without channel training and tracking. Hence, we introduce an attenuation factor $\eta$ to quantify the effectiveness of backscattered AN cancellation, $0 \leq \eta \leq 1$ \cite{li2022improving,9745773}.

 Following the AmBC-NOMA protocol,  $U_N$  first decodes $s_F$, then $s_N$, and ultimately decodes $s_C$  by using SIC technique. Given the AN injection described above, 
the received SINRs of $s_F$, $s_N$, and $s_C$ at $U_N$ can be respectively written as
\begin{align}\label{define_of_gammaUN_sF}
\gamma_{U_N}^{s_F}=\frac{|h_{SN}|^2\theta\gamma a_F}{
			|h_{SN}|^2\theta\gamma a_N \!\!+\!\!|h_{SB}|^2|h_{BN}|^2\beta^2 \left[\theta\!\!+\!\!\eta(1\!\!-\!\!\theta)\right]\gamma\!\!+\!\!1},
\end{align}
	\begin{align}\label{define_of_gammaUN_sN}
\gamma_{U_N}^{s_N}=\frac{|h_{SN}|^2 \theta\gamma a_N}{|h_{SB}|^2|h_{BN}|^2\beta^2 \left[\theta\!\!+\!\!\eta(1-\theta)\right]\gamma+1},
	\end{align}
and
	\begin{align}\label{define_gammaUNSC}
		\gamma_{U_N}^{s_C}=\frac{|h_{SB}|^2|h_{BN}|^2\beta^2\theta\gamma}{|h_{SB}|^2|h_{BN}|^2\beta^2\eta(1-\theta)\gamma+1},
	\end{align}
where $\gamma=P/\sigma^2$ represents transmit SNR. Meanwhile, $U_F$ only decodes $s_F$ by treating other signals as noise, whose SINR can be written as
\begin{align}
	\gamma_{U_F}^{s_F}=\frac{|h_{SF}|^2\theta\gamma a_F}{
		|h_{SF}|^2\theta\gamma a_N \!\!+\!\!|h_{SB}|^2|h_{BF}|^2\beta^2 \left[\theta\!\!+\!\!\eta(1\!\!-\!\!\theta)\right]\gamma\!\!+\!\!1}.
\end{align}

Next, we switch to the SINR at Eves. For each Eve, although its CSI is unknown, we assume that its statistical CSI is available \cite{liu2017enhancing}.\footnote{This is a general assumption in PLS analysis, which has been used in many works, such as  \cite{goel2008guaranteeing,liu2017enhancing,wang2022resisting}.}
Since Eves are non-colluding, they individually decode the received signals. Specifically, each Eve can  sequentially decode  $s_F$, $s_N$, and $s_C$ through SIC. Note that AN $z$ injected by BS in (\ref{transmitted_signal})  can be safeguarded from being compromised by Eves through  regular changes to the seed of the random sequence generator. Consequently, without prior information about $z$, Eves can hardly perform effective interference cancellation \cite{li2022improving}. Hence, BS
 can
limit the SINRs at Eves by controlling power fraction $\theta$. 
In this way, the eavesdropping SINRs towards $s_F$, $s_N$, and $s_C$  at $\text{Eve-}V$ can be respectively denoted by 
\begin{align}
\hspace*{-0.1cm}	\gamma_{V}^{s_F}\!\!=\!\!\frac{|h_{SV}|^2\theta\gamma a_F}{
		|h_{SV}|^2\theta\gamma a_N\!\!+\!\!|h_{SV}|^2(1\!\!-\!\!\theta)\gamma\!\!+\!\!|h_{SB}|^2|h_{BV}|^2\beta^2\gamma \!\!+\!\!1},
\end{align}
\begin{align}
	\gamma_{V}^{s_N}=\frac{|h_{SV}|^2\theta\gamma a_N}{
		|h_{SV}|^2(1\!\!-\!\!\theta)\gamma\!\!+\!\!|h_{SB}|^2|h_{BV}|^2\beta^2\gamma \!\!+\!\!1},
\end{align}
and
\begin{align}
	\gamma_{V}^{s_C}=\frac{|h_{SB}|^2|h_{BV}|^2\beta^2\theta\gamma}{
		(1\!\!-\!\!\theta)\gamma\left[|h_{SV}|^2\!\!+\!\!|h_{SB}|^2|h_{BV}|^2\beta^2\right] \!\!+\!\!1}.
\end{align}
We assume that the most detrimental Eve for $s_F$ has the highest instantaneous SINR for it, which can be written as  $\gamma_{E}^{s_F}=\max_{V\in\Phi_{\text{E}}}\{\gamma_{V}^{s_F}\}$. Similarly, the instantaneous SINRs at the most detrimental Eve  for $s_N$ and $s_C$ can be respectively expressed  as  $\gamma_{E}^{s_N}=\max_{V\in\Phi_{\text{E}}}\{\gamma_{V}^{s_N}\}$ as well as $\gamma_{E}^{s_C}=\max_{V\in\Phi_{\text{E}}}\{\gamma_{V}^{s_C}\}$.
\section{Outage Probability Analysis}
OP is a key performance metric when users' target rates are based on their required quality of service.
Hence, in this section, we present the OP analysis of the considered system model.
Additionally, we explore the asymptotic expressions of OP as well as the diversity order  at high SNR to provide more insights.
\subsection{OP of $U_F$}
Recall that $U_F$ only needs to decode $s_F$, it experiences an outage only when it fails to decode $s_F$ successfully.  Then the OP of $U_F$ can be written as
\begin{align}
		\setcounter{equation}{12}
P_{\text{out},F}=\Pr\left(\gamma_{U_F}^{s_F}\leq\gamma_{F,th}^{s_F}\right),
\end{align}
where $\gamma_{F,th}^{s_F}$ is the SINR threshold of $s_F$ at $U_F$.
Specifically, the theoretical expression for OP of $U_F$ is summarized in the following theorem.
\begin{theorem} \label{theorem_OPF}
The OP of $U_F$ can be derived as
\begin{align}\label{exp_of_OPF}
\hspace*{-0.3cm}P_{{\rm out},F}\!\!=\!\!\begin{cases}
		1\!+\!\mathcal{A}_0\exp\left(\mathcal{A}_0\!-\!\mathcal{B}_0\right){\rm Ei}\left(-\mathcal{A}_0\right),&\!\! \frac{a_F}{a_N}>\gamma_{F,th}^{s_F},\\
		1,&\!\! \frac{a_F}{a_N}\leq\gamma_{F,th}^{s_F},
	\end{cases}			
\end{align}
where $\mathcal{A}_0$ and $\mathcal{B}_0$ have been defined in Table \ref{OP_parameters}.
\end{theorem}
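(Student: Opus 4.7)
The plan is to reduce the outage event to a tractable inequality in the three independent exponential channel gains $|h_{SF}|^2$, $|h_{SB}|^2$, $|h_{BF}|^2$, then evaluate the resulting probability via nested expectations. Substituting the definition of $\gamma_{U_F}^{s_F}$ into the outage event and clearing the denominator produces
\begin{align*}
|h_{SF}|^2\,\theta\gamma\bigl(a_F-a_N\gamma_{F,th}^{s_F}\bigr)\leq\gamma_{F,th}^{s_F}\Bigl(|h_{SB}|^2|h_{BF}|^2\beta^2[\theta+\eta(1-\theta)]\gamma+1\Bigr).
\end{align*}
If $a_F/a_N\leq\gamma_{F,th}^{s_F}$, the coefficient multiplying $|h_{SF}|^2$ is non-positive while the right-hand side is strictly positive, so the event holds almost surely and $P_{\text{out},F}=1$, giving the second branch of \eqref{exp_of_OPF} immediately.

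In the non-trivial case $a_F/a_N>\gamma_{F,th}^{s_F}$, I would divide by the positive coefficient to recast the event as $|h_{SF}|^2\leq c_1+c_2|h_{SB}|^2|h_{BF}|^2$ for explicit constants $c_1,c_2>0$ depending on $\gamma$, $\theta$, $\eta$, $\beta$, $a_N$, $a_F$ and $\gamma_{F,th}^{s_F}$. Conditioning on the product $|h_{SB}|^2|h_{BF}|^2$ and using $F_{|h_{SF}|^2}(x)=1-\exp(-x/\lambda_{SF})$ yields
\begin{align*}
P_{\text{out},F}=1-\exp(-c_1/\lambda_{SF})\,\mathbb{E}\!\left\{\exp\!\left(-\tfrac{c_2}{\lambda_{SF}}|h_{SB}|^2|h_{BF}|^2\right)\right\}.
\end{align*}
The constant in front is identified with $\exp(-\mathcal{B}_0)$, so the theorem reduces to computing the remaining expectation.

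The one genuine calculation, which I expect to be the main obstacle, is evaluating $\mathbb{E}\{e^{-a|h_{SB}|^2|h_{BF}|^2}\}$ for the product of two independent exponentials sitting inside an exponential. I would integrate out $|h_{BF}|^2$ first, using $\int_0^\infty\lambda_{BF}^{-1}e^{-z/\lambda_{BF}}e^{-ayz}dz=(1+ay\lambda_{BF})^{-1}$, and then handle the remaining integral $\int_0^\infty\lambda_{SB}^{-1}e^{-y/\lambda_{SB}}(1+ay\lambda_{BF})^{-1}dy$ via the affine substitution $t=(1+ay\lambda_{BF})/(a\lambda_{SB}\lambda_{BF})$. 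This converts the integrand into $t^{-1}e^{-t}$ on $[1/(a\lambda_{SB}\lambda_{BF}),\infty)$, which by the identity $\int_x^\infty t^{-1}e^{-t}dt=-\text{Ei}(-x)$ gives a closed form proportional to $-\mathcal{A}_0\exp(\mathcal{A}_0)\,\text{Ei}(-\mathcal{A}_0)$ once we set $\mathcal{A}_0=1/(a\lambda_{SB}\lambda_{BF})$. Assembling this with the $\exp(-\mathcal{B}_0)$ prefactor and matching $\mathcal{A}_0,\mathcal{B}_0$ to the entries of Table~\ref{OP_parameters} reproduces \eqref{exp_of_OPF}. The remaining work is bookkeeping to confirm the parameter identifications; the analytical crux is just recognizing the exponential integral after the substitution.
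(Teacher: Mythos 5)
Your proposal is correct and follows the same overall structure as the paper's Appendix A: the same reduction of the outage event to $|h_{SF}|^2\leq c_1+c_2\,|h_{SB}|^2|h_{BF}|^2$, the same treatment of the degenerate branch, and the same conditioning on the product gain with the exponential CDF of $|h_{SF}|^2$, so that everything hinges on $\mathbb{E}\{e^{-a\,|h_{SB}|^2|h_{BF}|^2}\}$. The only divergence is in how that expectation is evaluated: the paper plugs in the known product-of-exponentials density $f_{\nu_F}(v)=\tfrac{2}{\lambda_{SB}\lambda_{BF}}K_0\bigl(2\sqrt{v/(\lambda_{SB}\lambda_{BF})}\bigr)$ and invokes the tabulated Laplace transform of $K_0$ (Gradshteyn 6.614.4), whereas you integrate the two exponentials sequentially --- first obtaining the rational factor $(1+ay\lambda_{BF})^{-1}$, then recognizing $\int_x^\infty t^{-1}e^{-t}\,dt=-\mathrm{Ei}(-x)$ after an affine substitution. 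Your route is more elementary and self-contained (no Bessel-function machinery), and it lands on the same value $-\mathcal{A}_0 e^{\mathcal{A}_0}\mathrm{Ei}(-\mathcal{A}_0)$ with $\mathcal{A}_0=1/(a\lambda_{SB}\lambda_{BF})$ and prefactor $e^{-\mathcal{B}_0}$, matching Table I; the paper's route has the advantage of reusing the $K_0$ density, which it needs anyway in the later proofs for $U_N$ and the BD.
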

\begin{proof}
See Appendix \ref{Proof_of_OPF}.
\end{proof}
\subsection{OP of $U_N$}
Unlike  $U_F$, an outage occurs at  $U_N$
 when it fails to decode either $s_F$ or $s_N$. Thus the OP of $U_N$ can be expressed as
\begin{align}\label{case_when_OPN_occur}
	P_{\text{out},N}=1-\Pr\left(\gamma_{U_N}^{s_F}\geq\gamma_{N,th}^{s_F},\gamma_{U_N}^{s_N}\geq\gamma_{N,th}^{s_N}\right),
\end{align}
where $\gamma_{N,th}^{s_F}$ and $\gamma_{N,th}^{s_N}$ are respectively the SINR thresholds of $s_F$ and $s_N$ at $U_N$. 
The following theorem presents the theoretical expression for OP of $U_N$.
\begin{theorem} \label{theorem_OPN}
	The OP of $U_N$ can be derived as
	\begin{align}\label{exp_of_OPN}
		\hspace*{-0.3cm}P_{{\rm out},N}\!\!=\!\!\begin{cases}
			1\!+\!\mathcal{A}_1\exp\left(\mathcal{A}_1\!-\!\mathcal{B}_1\right){\rm Ei}\left(-\mathcal{A}_1\right),&\!\!\frac{a_F}{a_N}>\gamma_{N,th}^{s_F},\\
			1,&\!\!\frac{a_F}{a_N}\leq\gamma_{N,th}^{s_F},
		\end{cases}			
	\end{align}
	where $\mathcal{A}_1$ and $\mathcal{B}_1$ have been defined in Table \ref{OP_parameters}.
\end{theorem}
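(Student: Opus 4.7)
The plan is to mirror the derivation of Theorem~\ref{theorem_OPF}, with the main new ingredient being the handling of the two simultaneous SINR thresholds at $U_N$. Abbreviate $X=|h_{SN}|^2$, $Y=|h_{SB}|^2|h_{BN}|^2$, and $\kappa=\beta^2[\theta+\eta(1-\theta)]\gamma$. Clearing denominators in $\gamma_{U_N}^{s_F}\geq\gamma_{N,th}^{s_F}$ and $\gamma_{U_N}^{s_N}\geq\gamma_{N,th}^{s_N}$ gives, respectively,
\[
X\geq\frac{\gamma_{N,th}^{s_F}(\kappa Y+1)}{\theta\gamma(a_F-\gamma_{N,th}^{s_F}a_N)}\quad\text{and}\quad X\geq\frac{\gamma_{N,th}^{s_N}(\kappa Y+1)}{\theta\gamma a_N}.
\]
The first inequality is feasible only when $a_F/a_N>\gamma_{N,th}^{s_F}$; if this fails, the joint event has probability zero and $P_{\text{out},N}=1$, producing the second branch of (\ref{exp_of_OPN}). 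When the feasibility condition holds, both inequalities share the common form $X\geq(\kappa Y+1)(\text{const})$, so their conjunction collapses to the single inequality $X\geq(\kappa Y+1)\Theta^\star$ with $\Theta^\star$ equal to the larger of the two multipliers above.

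Next, since $X$ is exponential with mean $\lambda_{SN}$ and independent of $Y$, conditioning on $Y$ yields $\Pr(X\geq(\kappa Y+1)\Theta^\star\mid Y)=\exp(-\Theta^\star/\lambda_{SN})\exp(-\kappa\Theta^\star Y/\lambda_{SN})$. Averaging over $Y$ reduces the task to evaluating $\mathbb{E}[\exp(-cY)]$ with $c=\kappa\Theta^\star/\lambda_{SN}$, where $Y$ is the product of two independent exponentials. I would compute this by conditioning on $|h_{SB}|^2=u$, using the Laplace transform of $|h_{BN}|^2$ to obtain the factor $1/(1+c\lambda_{BN}u)$, and then integrating against $(1/\lambda_{SB})\exp(-u/\lambda_{SB})$. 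After pulling $1/(c\lambda_{BN})$ outside, the remaining integral matches the identity $\int_0^\infty e^{-au}/(b+u)\,du=-e^{ab}\,\mathrm{Ei}(-ab)$, giving a term of the form $-\mathcal{A}_1\exp(\mathcal{A}_1)\,\mathrm{Ei}(-\mathcal{A}_1)$ with $\mathcal{A}_1=\lambda_{SN}/(\kappa\Theta^\star\lambda_{SB}\lambda_{BN})$ and $\mathcal{B}_1=\Theta^\star/\lambda_{SN}$ (matching the entries of Table~\ref{OP_parameters}). Subtracting the resulting success probability from $1$ then delivers the first branch of (\ref{exp_of_OPN}).

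The main obstacle is bookkeeping rather than analysis: one must verify carefully that the two SINR constraints really do factor cleanly as $X\geq(\kappa Y+1)\Theta^\star$, so that the product-of-exponentials computation used in Theorem~\ref{theorem_OPF} transfers \emph{verbatim} with only the effective threshold replaced, and then identify $\mathcal{A}_1$ and $\mathcal{B}_1$ with the correct rescalings of $\Theta^\star$, $\kappa$, $\lambda_{SN}$, $\lambda_{SB}$, and $\lambda_{BN}$. The remaining steps are standard exponential-integral manipulations.
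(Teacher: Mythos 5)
Your proposal is correct and follows essentially the same route as the paper: both collapse the two SINR constraints into the single event $X\geq(\kappa Y+1)\Theta^\star$ via the max (the paper's $\Gamma/(\theta\gamma)$ is exactly your $\Theta^\star$), identify the infeasible branch from $a_F/a_N\leq\gamma_{N,th}^{s_F}$, and then average the exponential tail over the product of two exponentials to land on the same ${\rm Ei}$ expression with matching $\mathcal{A}_1$ and $\mathcal{B}_1$. The only cosmetic difference is that the paper evaluates $\mathbb{E}[e^{-cY}]$ by integrating against the $K_0$-form product density via [Gradshteyn, Eq.\ (6.614.4)], whereas you use iterated conditioning and the identity $\int_0^\infty e^{-au}/(b+u)\,du=-e^{ab}{\rm Ei}(-ab)$; these are equivalent computations.
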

\begin{proof}
	See Appendix \ref{Proof_of_theorem_OPN}.
\end{proof}

\begin{table}[t]
	\centering
	\caption{ Parameters for OP of different users}  
	\label{OP_parameters}
	\scalebox{1}{	\begin{tabular}{|c|c|}
			\hline\hline
			OP & Corresponding Parameters \\
			\hline\hline
			\multirow{2}{*}{$P_{{\rm out},F}$} & $ 
			\mathcal{A}_0=\frac{\theta\left(a_F-a_N\gamma_{F,th}^{s_F}\right)\lambda_{SF}}{\beta^2\gamma_{F,th}^{s_F}\left[\theta+\eta(1-\theta)\right]\lambda_{SB}\lambda_{BF}}
			$   \\  
			&  $\mathcal{B}_0=\frac{\gamma_{F,th}^{s_F}}{\theta\gamma\left(a_F-a_N\gamma_{F,th}^{s_F}\right)\lambda_{SF} }$ \\
			\hline
			\multirow{4}{*}{$P_{{\rm out},N}$} & $ 
			\mathcal{A}_1=\frac{\theta\lambda_{SN}}{\beta^2\Gamma\left[\theta+\eta(1-\theta)\right]\lambda_{SB}\lambda_{BN}}
			$         \\ [1ex]  
			& $\mathcal{B}_1=\frac{\Gamma}{\theta\gamma\lambda_{SN}}$ \\    
			& $\Gamma=\max\left(\frac{\gamma_{N,th}^{s_F}}{a_F-a_N\gamma_{N,th}^{s_F}},\frac{\gamma_{N,th}^{s_N}}{a_N}\right)$\\
			\hline
			\multirow{6}{*}{$P_{{\rm out},C}$} &  $\mathcal{A}_2=\frac{\pi\gamma_{N,th}^{s_C}\exp(-\frac{\Gamma}{\theta\gamma\lambda_{SN}})}{n\beta^2\gamma\left[\theta-\eta(1-\theta)\gamma_{N,th}^{s_C}\right]\lambda_{SB}\lambda_{BN}}$          \\  
			& $\mathcal{B}_2=\frac{\Gamma\left[\theta+\eta(1-\theta)\right]\gamma_{N,th}^{s_C}\left(t_i+1\right)}{2\gamma\left[\theta-\eta(1-\theta)\gamma_{N,th}^{s_C}\right]\theta\lambda_{SN}}$   \\  
			& $\mathcal{C}_2=2\sqrt{\frac{\gamma_{N,th}^{s_C}\left(t_i+1\right)}{2\lambda_{SB}\lambda_{BN}\beta^2\gamma\left[\theta-\eta(1-\theta)\gamma_{N,th}^{s_C}\right]}}$\\ 
			& $t_i=\cos\left(\frac{(2i-1)\pi}{2n}\right)$\\  
			\hline\hline
	\end{tabular}}
\end{table}

\subsection{OP of BD}
According to the principles of AmBC-NOMA, the BD signal $s_C$ is considered successfully decoded only when $U_N$ can sequentially decode $s_F$, $s_N$, and $s_C$. Therefore, the OP of BD can be written as
\begin{align}\label{cases_when_sc_cannot_decode}
	P_{\text{out},C}\!\!=\!\!1\!-\!\Pr\left(\gamma_{U_N}^{s_F}\geq\gamma_{N,th}^{s_F},\gamma_{U_N}^{s_N}\geq\gamma_{N,th}^{s_N},\gamma_{U_N}^{s_C}\geq\gamma_{N,th}^{s_C}\right),
\end{align}
where $\gamma_{N,th}^{s_F}$ and $\gamma_{N,th}^{s_N}$ have been defined in (\ref{case_when_OPN_occur});  $\gamma_{N,th}^{s_C}$ is the SINR threshold  of $s_C$  at $U_N$. 
Subsequently,  we can achieve the following theorem 
\begin{theorem} \label{theorem_OPC}
	The OP of BD can be derived as (\ref{exp_of_OPC}), shown at the top of this page,
	where $n$ represents the calculation accuracy; $\mathcal{A}_1$, $\mathcal{B}_1$, $\mathcal{A}_2$, $\mathcal{B}_2$, $\mathcal{C}_2$, and $t_i$  have been defined in Table \ref{OP_parameters}.
\end{theorem}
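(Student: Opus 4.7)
\textbf{Proof proposal for Theorem \ref{theorem_OPC}.}
My plan is to exploit the nested structure of the three decoding requirements at $U_N$ by writing $P_{{\rm out},C}$ as $P_{{\rm out},N}$ plus a correction term, and then to evaluate the correction term using the product-exponential density and a Gauss--Chebyshev quadrature. Concretely, let $X=|h_{SN}|^2$ and $Y=|h_{SB}|^2|h_{BN}|^2$, and observe that the complementary event in \eqref{cases_when_sc_cannot_decode} can be decomposed as
\begin{align*}
\Pr\!\left(\gamma_{U_N}^{s_F}\!\geq\!\gamma_{N,th}^{s_F},\gamma_{U_N}^{s_N}\!\geq\!\gamma_{N,th}^{s_N},\gamma_{U_N}^{s_C}\!\geq\!\gamma_{N,th}^{s_C}\right)
=\Pr\!\left(\gamma_{U_N}^{s_F}\!\geq\!\gamma_{N,th}^{s_F},\gamma_{U_N}^{s_N}\!\geq\!\gamma_{N,th}^{s_N}\right)-\Pr\!\left(\mathcal{E}\right),
\end{align*}
where $\mathcal{E}$ is the event that the first two conditions hold but $\gamma_{U_N}^{s_C}<\gamma_{N,th}^{s_C}$. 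The first probability on the right already appears inside $P_{{\rm out},N}$ from Theorem \ref{theorem_OPN}, which is why the $1+\mathcal{A}_1\exp(\mathcal{A}_1-\mathcal{B}_1){\rm Ei}(-\mathcal{A}_1)$ block will carry over verbatim into \eqref{exp_of_OPC}.

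Next, I would recast the three SINR constraints as inequalities in $(X,Y)$. As in the proof of Theorem \ref{theorem_OPN}, the first two conditions combine to $X\geq \Gamma\bigl(Y\beta^2[\theta+\eta(1-\theta)]\gamma+1\bigr)/(\theta\gamma)$, which requires $a_F/a_N>\gamma_{N,th}^{s_F}$ for $\Gamma$ to be finite. The $s_C$ condition reduces to $Y\geq y_0$ with $y_0\triangleq\gamma_{N,th}^{s_C}/\bigl(\beta^2\gamma[\theta-\eta(1-\theta)\gamma_{N,th}^{s_C}]\bigr)$, which is meaningful only when $\gamma_{N,th}^{s_C}<\theta/[\eta(1-\theta)]$; otherwise the success probability is zero and $P_{{\rm out},C}=1$, yielding the second branch in \eqref{exp_of_OPC}. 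Under the feasible branch, $\Pr(\mathcal{E})$ becomes an integral of $\Pr(X\geq\cdot\mid Y=y)f_Y(y)$ over $y\in[0,y_0)$. Using the exponential CDF of $X$ and the well-known product-exponential density $f_Y(y)=\frac{2}{\lambda_{SB}\lambda_{BN}}K_0\!\bigl(2\sqrt{y/(\lambda_{SB}\lambda_{BN})}\bigr)$, this gives
\begin{align*}
\Pr(\mathcal{E})=\exp\!\left(-\tfrac{\Gamma}{\theta\gamma\lambda_{SN}}\right)\!\int_{0}^{y_{0}}\!\exp\!\left(-\tfrac{\Gamma\beta^{2}[\theta+\eta(1-\theta)]y}{\theta\lambda_{SN}}\right)\tfrac{2}{\lambda_{SB}\lambda_{BN}}K_{0}\!\left(2\sqrt{\tfrac{y}{\lambda_{SB}\lambda_{BN}}}\right)dy.
\end{align*}

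Finally, because this integral mixes an exponential and a $K_0$ Bessel over a bounded interval and has no clean closed form, I would apply the change of variable $y=\tfrac{y_{0}}{2}(t+1)$ to map the range to $[-1,1]$, insert the factor $\sqrt{1-t^{2}}/\sqrt{1-t^{2}}$, and invoke the $n$-point Gauss--Chebyshev rule with nodes $t_i=\cos\!\bigl((2i-1)\pi/(2n)\bigr)$. Matching the resulting prefactor with $y_{0}$ substituted gives exactly $\mathcal{A}_2$, while the exponential and Bessel arguments at $y=y_0(t_i+1)/2$ reproduce $\mathcal{B}_2$ and $\mathcal{C}_2$ in Table \ref{OP_parameters}. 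Combining with $P_{{\rm out},N}$ yields \eqref{exp_of_OPC}.

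The main obstacle is handling the bounded integral of $\exp(-cy)K_0\!\bigl(2\sqrt{y/d}\bigr)$ on $[0,y_0]$: no elementary antiderivative exists, and writing it as a difference $\int_0^\infty-\int_{y_0}^\infty$ would still leave one side in $K_0$-type tail integrals that do not simplify to the ${\rm Ei}$ block used for $P_{{\rm out},F}$ and $P_{{\rm out},N}$. Gauss--Chebyshev quadrature is what breaks this deadlock while keeping the expression in a compact finite-sum form, and properly bookkeeping the factor $y_0/2$ from the change of variables (so that $\mathcal{A}_2$ absorbs both the Jacobian and the prefactor $2/(\lambda_{SB}\lambda_{BN})$) is the only delicate algebraic step.
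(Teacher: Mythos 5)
Your proposal is correct and follows essentially the same route as the paper: the paper likewise reduces the three SINR constraints to $|h_{SN}|^2\geq \nu_N\kappa_8+\Gamma/(\theta\gamma)$ together with $\nu_N>\gamma_{N,th}^{s_C}/\kappa_7$, splits the resulting integral over $\nu_N$ as $\int_{y_0}^{\infty}=\int_{0}^{\infty}-\int_{0}^{y_0}$ (your $\Pr(A,B)-\Pr(\mathcal{E})$ decomposition is the same split phrased probabilistically), evaluates the full-range piece exactly as in Theorem \ref{theorem_OPN} to recover the ${\rm Ei}$ block, and handles the bounded piece by $n$-point Gauss--Chebyshev quadrature. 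Your bookkeeping of $y_0$, the Jacobian, and the resulting $\mathcal{A}_2$, $\mathcal{B}_2$, $\mathcal{C}_2$ matches Table \ref{OP_parameters}, and the two feasibility conditions delimiting the branches are identified correctly.
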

\begin{proof}
	See Appendix \ref{Proof_of_theorem_OPC}.
\end{proof} 
 \subsection{Asymptotic Analysis}
To provide important insights at high SNR, i.e., $\gamma\rightarrow\infty$, we explore the asymptotic outage behavior of different users, and then conduct the asymptotic diversity order analysis based on OP in the following corollaries.
\begin{corollary}\label{asy_OPF}
	At high SNR, the asymptotic OP of $U_F$ can be derived as
\begin{align}\label{outage_F_infty}
	P_{{\rm out},F}^{\text{asy}}=
	1\!+\!\mathcal{A}_0\exp\left(\mathcal{A}_0\right)(1-\mathcal{B}_0){\rm Ei}\left(-\mathcal{A}_0\right),
\end{align}	
when $\frac{a_F}{a_N}>\gamma_{F,th}^{s_F}$.	 
	
\end{corollary}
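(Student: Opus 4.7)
The plan is to start directly from the exact expression for $P_{{\rm out},F}$ given in Theorem \ref{theorem_OPF} and take the limit $\gamma\to\infty$ in closed form, exploiting the fact that only $\mathcal{B}_0$ carries an explicit $\gamma$ dependence. Looking at Table \ref{OP_parameters}, I would first observe that $\mathcal{A}_0$ is independent of the transmit SNR, while $\mathcal{B}_0=\gamma_{F,th}^{s_F}/[\theta\gamma(a_F-a_N\gamma_{F,th}^{s_F})\lambda_{SF}]$ is proportional to $1/\gamma$ and therefore vanishes as $\gamma\to\infty$. The exponential integral factor $\mathrm{Ei}(-\mathcal{A}_0)$ does not depend on $\gamma$ at all and can be treated as a constant throughout.

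Next I would factor the exponential $\exp(\mathcal{A}_0-\mathcal{B}_0)=\exp(\mathcal{A}_0)\exp(-\mathcal{B}_0)$ in (\ref{exp_of_OPF}) and apply the first-order Maclaurin expansion of $\exp(-\mathcal{B}_0)$ about $\mathcal{B}_0=0$, namely $\exp(-\mathcal{B}_0)=1-\mathcal{B}_0+O(\mathcal{B}_0^2)$. Substituting this into the expression of Theorem \ref{theorem_OPF} and retaining only the leading terms yields
\begin{align*}
P_{{\rm out},F}\approx 1+\mathcal{A}_0\exp(\mathcal{A}_0)(1-\mathcal{B}_0)\,\mathrm{Ei}(-\mathcal{A}_0),
\end{align*}
which is exactly the asymptotic form claimed in (\ref{outage_F_infty}). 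The case condition $a_F/a_N>\gamma_{F,th}^{s_F}$ is carried over unchanged from Theorem \ref{theorem_OPF}, since it governs whether the non-trivial branch of the piecewise OP is selected and is independent of $\gamma$.

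Honestly, there is no serious obstacle here since the derivation reduces to a one-term Taylor expansion of a scalar exponential, but two small points deserve care. First, I would briefly justify that keeping the $\mathcal{B}_0$ term (rather than dropping it as $o(1)$) is meaningful: since $\mathcal{A}_0\,\mathrm{Ei}(-\mathcal{A}_0)$ is a finite, $\gamma$-independent constant, the product $\mathcal{A}_0\exp(\mathcal{A}_0)\mathcal{B}_0\,\mathrm{Ei}(-\mathcal{A}_0)$ captures the dominant $\gamma$-dependent correction to an error floor, which is precisely the insight the corollary is meant to expose. Second, I would note that $\mathrm{Ei}(-\mathcal{A}_0)<0$ for $\mathcal{A}_0>0$, so the resulting asymptotic OP lies strictly below one and tends to the error floor $1+\mathcal{A}_0\exp(\mathcal{A}_0)\,\mathrm{Ei}(-\mathcal{A}_0)$ as $\gamma\to\infty$, consistent with the paper's later remark about error floors and zero diversity order.
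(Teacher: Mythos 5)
Your proposal is correct and follows exactly the paper's own route: the paper likewise obtains \eqref{outage_F_infty} by noting that only $\mathcal{B}_0\propto 1/\gamma$ vanishes at high SNR and applying the first-order expansion $\exp(-x)\approx 1-x$ to $\exp(\mathcal{A}_0-\mathcal{B}_0)$. Your additional observations on the sign of $\mathrm{Ei}(-\mathcal{A}_0)$ and the resulting error floor are consistent with Corollary \ref{OP_infty} and Remark \ref{remark1}.
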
 
 \begin{proof}
The above equation can be straightforwardly attained by using the Taylor series expansion of $\exp(\cdot)$, where $\exp(-x) \approx 1 - x$ when $x \to 0$.
 \end{proof}
\begin{corollary}\label{asy_OPN}
	At high SNR, the asymptotic OP of $U_N$ can be derived as
	\begin{align}\label{outage_N_infty}
		P_{{\rm out},N}^{\text{asy}}=
		1\!+\!\mathcal{A}_1\exp\left(\mathcal{A}_1\right)(1\!-\!\mathcal{B}_1){\rm Ei}\left(-\mathcal{A}_1\right),
	\end{align}
	when $\frac{a_F}{a_N}>\gamma_{N,th}^{s_F}$.
\end{corollary}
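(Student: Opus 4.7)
The plan is to start directly from the exact expression for $P_{{\rm out},N}$ established in Theorem \ref{theorem_OPN}, namely
\[
P_{{\rm out},N} = 1 + \mathcal{A}_1\exp(\mathcal{A}_1 - \mathcal{B}_1)\,{\rm Ei}(-\mathcal{A}_1),
\]
valid under $a_F/a_N > \gamma_{N,th}^{s_F}$, and then pass to the limit $\gamma \to \infty$ by isolating the only $\gamma$-dependent factor in this closed form.

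First I would inspect the parameters in Table \ref{OP_parameters}. The quantity $\mathcal{A}_1 = \theta\lambda_{SN}/\bigl(\beta^2\Gamma[\theta+\eta(1-\theta)]\lambda_{SB}\lambda_{BN}\bigr)$ is independent of $\gamma$, and so is the threshold combination $\Gamma$; consequently $\mathrm{Ei}(-\mathcal{A}_1)$ and the prefactor $\mathcal{A}_1$ contribute no asymptotic expansion at all. The single $\gamma$-dependent quantity is $\mathcal{B}_1 = \Gamma/(\theta\gamma\lambda_{SN}) = O(1/\gamma)$, which vanishes as $\gamma \to \infty$.

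Next, I would factor the exponential as $\exp(\mathcal{A}_1 - \mathcal{B}_1) = \exp(\mathcal{A}_1)\exp(-\mathcal{B}_1)$ and apply the first-order Taylor expansion $\exp(-x) = 1 - x + O(x^2)$ at $x = \mathcal{B}_1$. Substituting $\exp(-\mathcal{B}_1) \approx 1 - \mathcal{B}_1$ into the exact OP formula and retaining only the leading-order term immediately yields
\[
P_{{\rm out},N}^{\rm asy} = 1 + \mathcal{A}_1 \exp(\mathcal{A}_1)(1 - \mathcal{B}_1)\,{\rm Ei}(-\mathcal{A}_1),
\]
which is the claimed expression. The constraint $a_F/a_N > \gamma_{N,th}^{s_F}$ is inherited verbatim from Theorem \ref{theorem_OPN}, as it selects the non-trivial branch of the OP.

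There is essentially no serious obstacle; the argument is a one-line linearization, entirely parallel to Corollary \ref{asy_OPF}. The only point requiring care is verifying that every factor other than $\mathcal{B}_1$, in particular $\mathcal{A}_1$, $\Gamma$, and the large-scale fading gains $\lambda_{\iota}$, is genuinely $\gamma$-free, so that the Taylor expansion about $\mathcal{B}_1 = 0$ captures the full high-SNR behavior without missing any hidden correction. As a sanity check one observes that the $\gamma$-independent piece $1 + \mathcal{A}_1 \exp(\mathcal{A}_1)\,{\rm Ei}(-\mathcal{A}_1)$ survives the limit, revealing an error floor and hence a zero diversity order, which is consistent with the insight anticipated in the text preceding the corollary.
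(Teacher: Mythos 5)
Your proposal is correct and follows exactly the paper's own route: the paper proves the analogous Corollary~\ref{asy_OPF} by applying $\exp(-x)\approx 1-x$ to the sole $\gamma$-dependent factor $\exp(-\mathcal{B}_0)$, and Corollary~\ref{asy_OPN} is obtained by the identical linearization of $\exp(-\mathcal{B}_1)$ in the exact expression from Theorem~\ref{theorem_OPN}. Your additional check that $\mathcal{A}_1$ and $\Gamma$ are $\gamma$-free is the right (and only) point of care, and it holds.
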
 
\begin{corollary}\label{asy_OPC}
	At high SNR, the asymptotic OP of BD can be derived as
	\begin{align}\label{outage_C_infty}
	&	P_{{\rm out},C}^{\text{asy}}=
		1\!+\!\mathcal{A}_1\exp\left(\mathcal{A}_1\right)(1\!-\!\mathcal{B}_1){\rm Ei}\left(-\mathcal{A}_1\right)\nonumber\\
		&\!-\!	\mathcal{A}_2^{\text{asy}}\frac{(1-\frac{\Gamma}{\theta\gamma\lambda_{SN}})}{\gamma}\sum_{i=1}^{n}\sqrt{1-t_i^2}(1-\mathcal{B}_2)\ln\left(\frac{\mathcal{C}_2}{2}\right),
	\end{align}
	where 
	\begin{align}
	\mathcal{A}_2^{\text{asy}}=\frac{\pi\gamma_{N,th}^{s_C}}{n\beta^2\left[\theta-\eta(1-\theta)\gamma_{N,th}^{s_C}\right]\lambda_{SB}\lambda_{BN}},
	\end{align}
	when $\frac{a_F}{a_N}>\gamma_{N,th}^{s_F}\,\text{and}\, \gamma_{N,th}^{s_C}<\frac{\theta}{\eta(1-\theta)}$.
\end{corollary}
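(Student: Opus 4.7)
The proof plan is to take the exact expression in (\ref{exp_of_OPC}) for $\frac{a_F}{a_N}>\gamma_{N,th}^{s_F}$ and $\gamma_{N,th}^{s_C}<\frac{\theta}{\eta(1-\theta)}$, and pass to the limit $\gamma\to\infty$ by Taylor-expanding each quantity that vanishes with $1/\gamma$. From Table \ref{OP_parameters}, the only $\gamma$-dependent ingredients are $\mathcal{B}_1=\Gamma/(\theta\gamma\lambda_{SN})$, $\mathcal{A}_2$ (through its prefactor $1/\gamma$ and its $\exp(-\Gamma/(\theta\gamma\lambda_{SN}))$ factor), $\mathcal{B}_2$, and $\mathcal{C}_2$, all of which tend to $0$ at high SNR; $\mathcal{A}_0$, $\mathcal{A}_1$, and the Gauss--Chebyshev nodes $t_i$ are independent of $\gamma$. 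Thus the asymptotic structure of (\ref{exp_of_OPC}) is controlled entirely by these small quantities.

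First, I would handle the ${\rm Ei}$-term exactly as in Corollaries \ref{asy_OPF} and \ref{asy_OPN}: writing $\exp(\mathcal{A}_1-\mathcal{B}_1)=\exp(\mathcal{A}_1)\exp(-\mathcal{B}_1)$ and invoking $\exp(-x)\approx 1-x$ for $x\to 0$ reproduces the $\mathcal{A}_1\exp(\mathcal{A}_1)(1-\mathcal{B}_1){\rm Ei}(-\mathcal{A}_1)$ piece. Next, for the Bessel-$K_0$ sum I would apply the same first-order exponential expansion to both $\exp(-\Gamma/(\theta\gamma\lambda_{SN}))$ hidden inside $\mathcal{A}_2$ and to $\exp(-\mathcal{B}_2)$, which produces the factors $(1-\Gamma/(\theta\gamma\lambda_{SN}))/\gamma$ (collected with $\mathcal{A}_2^{\text{asy}}$) and $(1-\mathcal{B}_2)$ respectively.

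The key step is replacing $K_0(\mathcal{C}_2)$ by its small-argument expansion. Since $\mathcal{C}_2\propto 1/\sqrt{\gamma}\to 0$, I would use the standard series $K_0(z)=-\ln(z/2)-\gamma_{\mathrm E}+O(z^2\ln z)$ and retain the dominant logarithmic term $-\ln(\mathcal{C}_2/2)$. Absorbing the resulting minus sign into the overall coefficient then yields the third term of (\ref{outage_C_infty}) in the stated form $-\mathcal{A}_2^{\text{asy}}(1-\Gamma/(\theta\gamma\lambda_{SN}))\gamma^{-1}\sum_i\sqrt{1-t_i^2}(1-\mathcal{B}_2)\ln(\mathcal{C}_2/2)$.

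The main obstacle is justifying that the $K_0$ expansion may be truncated after the logarithm: one must argue that the Euler--Mascheroni constant and the $O(\mathcal{C}_2^2\ln\mathcal{C}_2)$ remainder are dominated by the diverging $-\ln(\mathcal{C}_2/2)=\tfrac12\ln\gamma+O(1)$, so that keeping only the leading log captures the correct high-SNR behaviour of that Gauss--Chebyshev contribution. A secondary care-point is the bookkeeping of the cross products between the three small-$1/\gamma$ factors multiplying $K_0(\mathcal{C}_2)$: I would keep them formally, as the stated corollary does, since the authors write the result as a product of the expansions rather than collecting only the single dominant term.
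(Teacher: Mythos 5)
Your proposal is correct and takes essentially the same route as the paper, whose entire proof is the one-line application of $\exp(-x)\approx 1-x$ and $K_0(x)\approx-\ln(x/2)$ as $x\to 0$ to the exact expression (\ref{exp_of_OPC}), exactly as you describe (including the sign flip from $K_0(\mathcal{C}_2)\to-\ln(\mathcal{C}_2/2)$ and the absorption of $\exp(-\Gamma/(\theta\gamma\lambda_{SN}))/\gamma$ into $\mathcal{A}_2^{\text{asy}}$). Your added justification for truncating $K_0$ after the logarithm --- that $-\ln(\mathcal{C}_2/2)=\tfrac{1}{2}\ln\gamma+O(1)$ dominates the dropped Euler--Mascheroni constant --- is a level of care the paper omits.
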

 \begin{proof}
 The above equation can be derived by using $\exp(-x) \approx 1 - x$ and $K_0(x)\approx-\ln(\frac{x}{2})$, when $x \to 0$.
 \end{proof}
 \begin{corollary}\label{OP_infty}
From corollaries \ref{asy_OPF}, \ref{asy_OPN}, and \ref{asy_OPC}, we can respectively attain the lower floors of $P_{{\rm out},F}$, $P_{{\rm out},N}$, and $P_{{\rm out},C}$ as 
\begin{align}\label{OP_infty_F}
	P_{{\rm out},F}^{\infty}=
	1\!+\!\mathcal{A}_0\exp\left(\mathcal{A}_0\right){\rm Ei}\left(-\mathcal{A}_0\right),
\end{align}	
and
\begin{align}\label{OP_infty_N_C}
	P_{{\rm out},N}^{\infty}=P_{{\rm out},C}^{\infty}=
	1\!+\!\mathcal{A}_1\exp\left(\mathcal{A}_1\right){\rm Ei}\left(-\mathcal{A}_1\right).
\end{align}
 \end{corollary}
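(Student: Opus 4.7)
The plan is to take the limit $\gamma \to \infty$ in each of the three asymptotic expressions from Corollaries \ref{asy_OPF}, \ref{asy_OPN}, and \ref{asy_OPC}. The key observation is that in Table \ref{OP_parameters}, every parameter $\mathcal{B}_i$ contains an explicit factor of $1/\gamma$ in its denominator, while $\mathcal{A}_0$, $\mathcal{A}_1$, and the auxiliary constant $\mathcal{A}_2^{\text{asy}}$ introduced in Corollary \ref{asy_OPC} are all independent of $\gamma$. Consequently, $\mathcal{B}_0$, $\mathcal{B}_1$, and $\mathcal{B}_2$ each tend to zero, so the factors $(1-\mathcal{B}_0)$, $(1-\mathcal{B}_1)$, and $(1-\mathcal{B}_2)$ all converge to $1$.

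Given this, I would first substitute $\mathcal{B}_0\to 0$ directly into (\ref{outage_F_infty}) to recover (\ref{OP_infty_F}), and similarly $\mathcal{B}_1\to 0$ into (\ref{outage_N_infty}) to recover the $U_N$ floor in (\ref{OP_infty_N_C}). Both steps are immediate algebraic substitutions and require no further manipulation, since $\mathcal{A}_0$ and $\mathcal{A}_1$ remain as the only surviving $\gamma$-independent quantities.

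The nontrivial step is the BD case. Expression (\ref{outage_C_infty}) matches the $U_N$ part plus an additional correction proportional to $\mathcal{A}_2^{\text{asy}}\gamma^{-1}\bigl(1-\Gamma/(\theta\gamma\lambda_{SN})\bigr)\sum_i\sqrt{1-t_i^2}(1-\mathcal{B}_2)\ln(\mathcal{C}_2/2)$. I would argue that this correction vanishes in the high-SNR limit by inspecting $\mathcal{C}_2$ in Table \ref{OP_parameters}: since $\mathcal{C}_2\propto\gamma^{-1/2}$, one has $\ln(\mathcal{C}_2/2)=-\frac{1}{2}\ln\gamma+O(1)$; combined with the explicit $\gamma^{-1}$ prefactor, the entire correction is $O(\gamma^{-1}\ln\gamma)$, which tends to zero. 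Discarding it leaves precisely the $U_N$ floor, yielding the second equality in (\ref{OP_infty_N_C}).

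The main potential obstacle is the logarithmic blow-up of $\ln(\mathcal{C}_2/2)$, but since logarithmic growth is strictly dominated by any polynomial decay, the $\gamma^{-1}$ factor guarantees convergence to zero. No uniform control over the Chebyshev--Gauss summation is required since the sum is finite and each term is bounded, so the limit can be pushed inside the sum term by term without incident.
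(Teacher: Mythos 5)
Your proposal is correct and follows essentially the same route as the paper, which obtains the floors by letting $\gamma\to\infty$ in the asymptotic expressions of Corollaries \ref{asy_OPF}--\ref{asy_OPC} (the paper leaves this implicit, offering no separate proof). Your explicit check that the BD correction term is $O(\gamma^{-1}\ln\gamma)$ — so that the logarithmic divergence of $\ln(\mathcal{C}_2/2)$ is killed by the $\gamma^{-1}$ prefactor, forcing $P_{{\rm out},C}^{\infty}=P_{{\rm out},N}^{\infty}$ — is the one genuinely nontrivial step, and you handle it correctly.
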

  \begin{corollary}\label{diversity_of_ops}
The diversity order based on OP can be expressed as \cite{zheng2003diversity}:
\begin{align}\label{define_of_diversity}
d_i=-\lim\limits_{\gamma\rightarrow\infty }\frac{\log P_{{\rm out},i}^{\infty}}{\log\gamma},
\end{align}
where $i\in\{F,N,C\}$.
By substituting (\ref{OP_infty_F}) and (\ref{OP_infty_N_C}) into (\ref{define_of_diversity}), we can obtain the following relationship:
\begin{align}
d_F=d_N=d_C=0.
\end{align}
  \end{corollary}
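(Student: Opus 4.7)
The approach is straightforward: the diversity order defined in (\ref{define_of_diversity}) measures how fast $P_{{\rm out},i}$ decays with $\gamma$, so I need only observe that the error floors derived in Corollary \ref{OP_infty} are independent of $\gamma$ and strictly positive, which forces the limiting ratio to be zero.

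The plan is to proceed in three short steps. First, I would inspect the expressions in (\ref{OP_infty_F}) and (\ref{OP_infty_N_C}) together with the entries for $\mathcal{A}_0$ and $\mathcal{A}_1$ in Table \ref{OP_parameters}. Crucially, $\mathcal{A}_0$ and $\mathcal{A}_1$ depend only on the path-loss coefficients $\lambda_{SF}$, $\lambda_{SN}$, $\lambda_{SB}$, $\lambda_{BF}$, $\lambda_{BN}$, the power-allocation parameters $\theta$, $a_N$, $a_F$, the reflection coefficient $\beta$, the cancellation efficiency $\eta$, and the SINR thresholds, but \emph{not} on the transmit SNR $\gamma$. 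Hence $P_{{\rm out},F}^{\infty}$, $P_{{\rm out},N}^{\infty}$, and $P_{{\rm out},C}^{\infty}$ are constants with respect to $\gamma$.

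Second, I would verify that these constants lie strictly in $(0,1)$, so that their logarithms are finite. For any $\mathcal{A}>0$, the identity ${\rm Ei}(-\mathcal{A})<0$ combined with the bounds $-e^{-\mathcal{A}}\ln(1+1/\mathcal{A}) < {\rm Ei}(-\mathcal{A}) < -\tfrac{1}{2}e^{-\mathcal{A}}\ln(1+2/\mathcal{A})$ shows that $\mathcal{A}\exp(\mathcal{A}){\rm Ei}(-\mathcal{A}) \in (-1,0)$, so each floor is a strictly positive constant strictly below $1$. Consequently, $\log P_{{\rm out},i}^{\infty}$ is a finite negative number for $i\in\{F,N,C\}$.

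Third, I would plug into the diversity-order definition: since the numerator $\log P_{{\rm out},i}^{\infty}$ is a finite constant while the denominator $\log\gamma\to\infty$ as $\gamma\to\infty$, the limit in (\ref{define_of_diversity}) is zero for each user, yielding $d_F=d_N=d_C=0$. There is no genuine obstacle here; the only step that deserves a line of justification is the finiteness/positivity of the error floors, which merely requires recalling the sign and asymptotics of ${\rm Ei}(-\mathcal{A})$ for $\mathcal{A}>0$. The qualitative takeaway, worth stating after the derivation, is that the presence of the backscatter link and the backscattered AN residue creates an irreducible interference term whose statistics do not scale with $\gamma$, which is exactly why no diversity gain is harvested and why error floors appear in the OP curves.
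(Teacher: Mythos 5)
Your proposal is correct and follows essentially the same route as the paper, whose proof simply observes that the floors in Corollary \ref{OP_infty} are $\gamma$-independent constants so the limit in (\ref{define_of_diversity}) vanishes. Your additional verification that $\mathcal{A}\exp(\mathcal{A})\mathrm{Ei}(-\mathcal{A})\in(-1,0)$, hence that each floor lies strictly in $(0,1)$ with finite logarithm, is a sound refinement of a step the paper leaves implicit.
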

   \begin{proof}
  	The above corollary can be easily obtained from the results presented in \textbf{Corollary  \ref{OP_infty}}.
  \end{proof}
\begin{remark}\label{remark1}
{\rm Users' SINR  is impacted by inter-user interference and residual AN. As the transmit SNR increases, the users' SINR correspondingly improves, pushing the system's outage performance to its lower floor, with the diversity order eventually converging to zero. These lower floors are influenced by the power allocation coefficient $a_N$ (or $a_F$), the power fraction allocated to the useful signal $\theta$, backscattered AN cancellation efficiency $\eta$, and the reflection efficiency $\beta$.  Furthermore, an interesting insight is gained: $P_{{\rm out},N}^{\infty} = P_{{\rm out},C}^{\infty}$. This implies that, at high SNR, the SINR threshold of $s_C$ at $U_N$ becomes negligible.}

\end{remark}
\begin{figure*}[!t]
	\normalsize 
	\setcounter{MYtempeqncnt}{\value{equation}} 
	\setcounter{equation}{27}
	\begin{align}\label{exp_of_IPC}
		P_{\text{int},C}
		\!=\!\begin{cases}
			1\!-\!\exp\left\{\!-2\pi\lambda_e\sum_{i=1}^{N}w_i\left[1 \!+\! \mathcal{G}_{e2}\exp(\mathcal{B}_{e2})\right.\right. \\ 
			\left.\left. \quad\quad\quad\quad    - \mathcal{A}_{e3} \sum_{i=1}^{n} \sqrt{1 \!-\! t_i^2} \left(1 \!-\! \exp(-\mathcal{B}_{e3})\right) K_0(\mathcal{C}_{e3})\right](l_i\!+\!r_p)\exp(l_i)\right\}, &\!\! \frac{\theta}{1-\theta}>\gamma_{E,th}^{s_C}, \\[5pt]
			0, & \!\!\frac{\theta}{1-\theta}\leq\gamma_{E,th}^{s_C},
		\end{cases}	
	\end{align}
	\begin{align}\label{IP_BD_asy}
		P_{\text{int},C}^{\textit{asy}}
		\!=\!
		1\!-\!\exp\left\{\!-2\pi\lambda_e\sum_{i=1}^{N}w_i\left[1 \!+\! \mathcal{G}_{e2}(1+\mathcal{B}_{e2})+ \mathcal{A}_{e3} \sum_{i=1}^{n} \sqrt{1 \!-\! t_i^2} \ln\left(\frac{\mathcal{C}_{e3}}{2}\right)\mathcal{B}_{e3}\right](l_i\!+\!r_p)\exp(l_i)\right\}
	\end{align}
	\setcounter{equation}{\value{MYtempeqncnt}}
	\hrulefill 
\end{figure*} 
\begin{table}[t]
	\centering
	\caption{Parameters for IP of different users}  
	\label{IP_parameters}
	\scalebox{1}{
		\begin{tabular}{|c|c|}
			\hline\hline
			IP & Corresponding Parameters \\
			\hline\hline
			\multirow{3}{*}{$P_{\text{int},F}$} & $ 
			\mathcal{A}_{e0}=\frac{[\theta a_F-\theta a_N\gamma_{E,th}^{s_F}-(1-\theta) \gamma_{E,th}^{s_F}]}{\lambda_{SB}\gamma_{E,th}^{s_F}\beta^2}
			$  \\  
			& $\mathcal{G}_{e0}=\mathcal{A}_{e0}\exp(\mathcal{A}_{e0})\text{Ei}(-\mathcal{A}_{e0})$ \\  
			& $\mathcal{M}_{e0}=\frac{\gamma_{E,th}^{s_F}}{[\theta a_F-\theta  a_N\gamma_{E,th}^{s_F}-(1-\theta)  \gamma_{E,th}^{s_F}]\gamma}$ \\ 
			\hline
			\multirow{4}{*}{$P_{{\rm int},N}$} & $ 
			\mathcal{A}_{e1}=\frac{[\theta a_N-(1-\theta)\gamma_{E,th}^{s_N}]}{\lambda_{SB}\gamma_{E,th}^{s_N}\beta^2}
			$  \\  
			& $\mathcal{G}_{e1}=\mathcal{A}_{e1}\exp(\mathcal{A}_{e1})\text{Ei}(-\mathcal{A}_{e1})$ \\  
			& $\mathcal{M}_{e1}=\frac{\gamma_{E,th}^{s_N}}{\gamma[\theta a_N-(1-\theta)\gamma_{E,th}^{s_N}]}$ \\ 
			\hline
			\multirow{11}{*}{$P_{{\rm int},C}$} & $\mathcal{A}_{e2}= \frac{\gamma_{E,th}^{s_C} (1 - \theta)}{\lambda_{SB}\delta}$ \\  
			& $\mathcal{B}_{e2} = \frac{(l_i+r_p)^\alpha}{ (1 - \theta) \gamma}$ \\  
			& $\mathcal{G}_{e2}=\mathcal{A}_{e2}\exp(\mathcal{A}_{e2})\text{Ei}(-\mathcal{A}_{e2})$ \\  
			& $\mathcal{A}_{e3}=\frac{\gamma_{E,th}^{s_C}\pi(l_i+r_p)^\alpha}{2\gamma\delta\lambda_{SB}}$\\
			&$\mathcal{B}_{e3}=\frac{\gamma_{E,th}^{s_C}(t_i+1)\delta(l_i+r_p)^\alpha}{\gamma\delta2\gamma_{E,th}^{s_C}(1-\theta)}-\frac{(l_i+r_p)^\alpha}{(1-\theta)\gamma}$\\
			& $\mathcal{C}_{e3}=2\sqrt{\frac{\gamma_{E,th}^{s_C}(t_i+1)(l_i+r_p)^\alpha}{\gamma\delta2\lambda_{SB}}}$ \\ 
			& $\delta= \beta^2 [\theta - (1 - \theta)\gamma_{E,th}^{s_C}]$ \\ 
			& $w_i=\frac{l_i}{(N+1)^2[L_{N+1}(l_i)]^2}$ \\ 
			& $t_i=\cos\left(\frac{(2i-1)\pi}{2n}\right)$\\
			\hline\hline
		\end{tabular}
	}
\end{table}
\section{Intercept Probability Analysis}
An intercept event is considered to occur once the secrecy rate becomes negative. Therefore, IP, defined as the probability that Eve successfully intercepts the legitimate signal, is a key metric for evaluating PLS performance \cite{zou2013optimal}. In this section, we examine the system's intercept performance by deriving theoretical expressions for the IP, along with its asymptotic solution in the high SNR regime.
 
  In accordance with the definition of IP, legitimate users or the BD are considered intercepted if the most detrimental Eve successfully decodes their corresponding messages. Let  $\gamma_{E,th}^{s_i}$ ($i \in \{N,F,C\}$) represent the secrecy SINR threshold of $s_i$, then 
the IP of the most detrimental Eve intercepting signal $s_i$  is given by
\begin{align}\label{IP_define}
	P_{\text{int},i}&=\Pr\left(\gamma_{E}^{s_i}>\gamma_{E,th}^{s_i}\right)=1-\Pr\left(\max_{V\in\Phi_{\text{E}}}\{\gamma_{V}^{s_i}\}\leq\gamma_{E,th}^{s_i}\right)\nonumber\\
 	&=1-\mathbb{E}_{\Phi_{\text{E}}}\left\{\prod\limits_{v\in\Phi_{\text{E}}\atop d_{SV}\geq r_p}\Pr\left(\gamma_{V}^{s_F}\leq \gamma_{E,th}^{s_i}\right)\right\}\nonumber\\
	&\overset{(a)}{=}1-\mathbb{E}_{\Phi_{\text{E}}}\left\{\prod\limits_{v\in\Phi_{\text{E}}\atop d_{SV}\geq r_p}F_{\gamma_{V}^{s_i}}(\gamma_{E,th}^{s_i})\right\}\nonumber\\
	&\overset{(b)}{=}1-\exp\left(-\lambda_e\int_{S}\left[1-F_{\gamma_{V}^{s_i}}(\gamma_{E,th}^{s_i})\right]dd_{SV}\right),
\end{align}
where (a) follows from the independent nature of the non-colluding Eves' SINRs, and (b) is based on the assumption that $d_{SV}\approx d_{BV}$, following from the probability generating functional (PGFL) of an HPPP. In this manner, we can achieve the following theorems.

\subsection{IP of $U_F$}
Based on (\ref{IP_define}), we can provide the theoretical expression for the IP of $U_F$ in the following theorem.
\begin{theorem}\label{theorem_IPF}
	The IP of $U_F$ can be derived as
	\begin{align}\label{IP_UF}
		P_{{\rm int},F}\!\!=\!\!\begin{cases}
			1\!-\!\exp\left(\lambda_e\mathcal{G}_{e0}2\pi\frac{\Gamma(\frac{2}{\alpha},\mathcal{M}_{e0}r_p^\alpha)}{\alpha\mathcal{M}_{e0}}\right),&\!\!\!\frac{\theta a_F}{\theta a_N+1-\theta}>\gamma_{E,th}^{s_F},\\
			0,&\!\!\!\frac{\theta a_F}{\theta a_N+1-\theta}\leq\gamma_{E,th}^{s_F},
		\end{cases}			
	\end{align}
	where $\mathcal{G}_{e0}$ and $\mathcal{M}_{e0}$ have been defined in Table \ref{IP_parameters}.
\end{theorem}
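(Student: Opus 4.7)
\textbf{Proof proposal for Theorem \ref{theorem_IPF}.} My plan is to start from the generic HPPP template already derived in (\ref{IP_define}) specialized to $i=F$, and reduce everything to two ingredients: (i) the single-Eve complementary CDF $1-F_{\gamma_V^{s_F}}(\gamma_{E,th}^{s_F})$, which I will get in closed form in $\mathrm{Ei}$, and (ii) a radial integral against $r\,dr$ over the Eve region $r=d_{SV}\ge r_p$, which I will push through the substitution $t=\mathcal{M}_{e0}r^\alpha$ to obtain an upper incomplete gamma function.

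First I would rewrite the event $\gamma_V^{s_F}>\gamma_{E,th}^{s_F}$. Collecting terms that are linear in $|h_{SV}|^2$ gives
\[
|h_{SV}|^2 A \gamma > \gamma_{E,th}^{s_F}\bigl(|h_{SB}|^2|h_{BV}|^2\beta^2\gamma + 1\bigr),
\]
with $A=\theta a_F-\theta a_N\gamma_{E,th}^{s_F}-(1-\theta)\gamma_{E,th}^{s_F}$. If $A\le 0$, which is exactly $\theta a_F/(\theta a_N+1-\theta)\le \gamma_{E,th}^{s_F}$, the event is vacuous so $P_{\rm int,F}=0$, delivering the lower branch of (\ref{IP_UF}). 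When $A>0$, conditioning on $(|h_{SB}|^2,|h_{BV}|^2)$ and using $|h_{SV}|^2\sim \exp(1/\lambda_{SV})$ gives a conditional complementary CDF of the form $\exp\!\bigl(-\gamma_{E,th}^{s_F}(|h_{SB}|^2|h_{BV}|^2\beta^2\gamma+1)/(A\gamma\lambda_{SV})\bigr)$.

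Next I would average over the BD-related channels by the tower rule: integrate first in $|h_{BV}|^2$ (exponential) to turn the inner expectation into a rational factor $1/(1+c|h_{SB}|^2\lambda_{BV})$, then integrate in $|h_{SB}|^2$ against its exponential density. After the substitution $u=1+c|h_{SB}|^2\lambda_{BV}$ this reduces to the standard identity $\int_1^\infty u^{-1}e^{-u/k}\,du=-\mathrm{Ei}(-1/k)$. Invoking the modelling assumption $d_{SV}\approx d_{BV}$ (so $\lambda_{BV}\approx\lambda_{SV}$) collapses the $r$-dependence inside the resulting $\mathrm{Ei}$ argument and leaves precisely
\[
1-F_{\gamma_V^{s_F}}(\gamma_{E,th}^{s_F}) = -\mathcal{G}_{e0}\exp(-\mathcal{M}_{e0}\,r^\alpha),
\]
with $\mathcal{G}_{e0}$ and $\mathcal{M}_{e0}$ as in Table \ref{IP_parameters}; note that $\mathcal{G}_{e0}<0$ since $\mathrm{Ei}(-\mathcal{A}_{e0})<0$, so the quantity above is non-negative as required.

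Finally I would substitute this into the PGFL line of (\ref{IP_define}) and pass to polar coordinates, so that $\int_S[\,\cdot\,]\,dd_{SV}$ becomes $2\pi\int_{r_p}^\infty r\exp(-\mathcal{M}_{e0}r^\alpha)dr$; the substitution $t=\mathcal{M}_{e0}r^\alpha$ evaluates this radial integral to the upper incomplete gamma expression $\Gamma(2/\alpha,\mathcal{M}_{e0}r_p^\alpha)/(\alpha\mathcal{M}_{e0})$ (up to the exponent absorbed into $\mathcal{M}_{e0}$), reproducing the top branch of (\ref{IP_UF}). The main obstacle is the double BD-channel expectation in the previous step: because $|h_{SB}|^2|h_{BV}|^2$ has a $K_0$-type product density, the appearance of $\mathrm{Ei}$ is not automatic, and the simplification $d_{SV}\approx d_{BV}$ is essential — otherwise the $\mathrm{Ei}$ argument would still carry $r$ and the subsequent PGFL radial integral would not close in elementary special functions.
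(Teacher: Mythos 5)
Your proposal is correct and follows essentially the same route as the paper's Appendix D: derive the single-Eve complementary CDF in $\mathrm{Ei}$ form, invoke $d_{SV}\approx d_{BV}$ so that $\mathcal{A}_{e0}$ loses its $r$-dependence, and then apply the PGFL with a polar-coordinate radial integral that closes as an upper incomplete gamma function. The only cosmetic difference is that you obtain the $\mathrm{Ei}$ factor by sequential exponential averaging over $|h_{BV}|^2$ and $|h_{SB}|^2$, whereas the paper integrates directly against the $K_0$-type product density of $\nu_V=|h_{SB}|^2|h_{BV}|^2$ via [\citen{gradshteyn2014table}, Eq.\ (6.614.4)] --- two equivalent evaluations of the same expectation --- and your parenthetical about the exponent on $\mathcal{M}_{e0}$ correctly flags that the general radial integral yields $\mathcal{M}_{e0}^{2/\alpha}$ in the denominator, which matches the stated theorem when $\alpha=2$.
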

\begin{proof}
	See Appendix \ref{Proof_of_theorem_IPF}.	
\end{proof}

\subsection{IP of $U_N$}
Similar to \textbf{Theorem \ref{theorem_IPF}}, the IP of $U_N$ can be obtained and is outlined in the following theorem.
\begin{theorem}
The IP of $U_N$ can be derived as
\begin{align}\label{IP_UN}
	\setcounter{equation}{29}
	P_{{\rm int},N}\!\!=\!\!\begin{cases}
		1\!-\!\exp\left(\lambda_e\mathcal{G}_{e1}2\pi\frac{\Gamma(\frac{2}{\alpha},\mathcal{M}_{e1}r_p^\alpha)}{\alpha\mathcal{M}_{e1}}\right),&\!\!\! \frac{\theta a_N}{1-\theta}>\gamma_{E,th}^{s_N},\\
		0,&\!\!\! \frac{\theta a_N}{1-\theta}\leq\gamma_{E,th}^{s_N},
	\end{cases}			
\end{align}	
where $\mathcal{G}_{e1}$ and $\mathcal{M}_{e1}$ have been defined in Table \ref{IP_parameters}.
\end{theorem}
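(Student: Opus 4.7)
The plan is to follow the same blueprint used for $P_{\text{int},F}$ in Theorem~\ref{theorem_IPF}, starting from the PGFL representation (\ref{IP_define}) specialized to $i=N$, so that the only new quantity to evaluate is the single-Eve CDF $F_{\gamma_V^{s_N}}(\gamma_{E,th}^{s_N})$ at an arbitrary distance $r=d_{SV}\ge r_p$.

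First I would manipulate the inequality $\gamma_V^{s_N}\le x$ using the expression for $\gamma_V^{s_N}$ to isolate $|h_{SV}|^2$:
\begin{align*}
\gamma_V^{s_N}\le x \iff |h_{SV}|^2\bigl[\theta a_N-(1-\theta)x\bigr]\gamma \le x\bigl(|h_{SB}|^2|h_{BV}|^2\beta^2\gamma+1\bigr).
\end{align*}
The bracket on the left is positive precisely when $\tfrac{\theta a_N}{1-\theta}>x$, which for $x=\gamma_{E,th}^{s_N}$ is exactly the nontrivial case in the theorem. In the complementary case $\tfrac{\theta a_N}{1-\theta}\le\gamma_{E,th}^{s_N}$, the SINR is bounded above by $\tfrac{\theta a_N}{1-\theta}$ for every channel realization, so the event $\{\gamma_V^{s_N}>\gamma_{E,th}^{s_N}\}$ is empty and $P_{\text{int},N}=0$, giving the second branch of (\ref{IP_UN}) immediately.

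For the nontrivial case, I would condition on $|h_{SB}|^2|h_{BV}|^2$ and use the exponential CDF of $|h_{SV}|^2$ to obtain
\begin{align*}
1-F_{\gamma_V^{s_N}}(\gamma_{E,th}^{s_N}) = \exp\!\left(-\tfrac{\gamma_{E,th}^{s_N}}{\gamma\lambda_{SV}[\theta a_N-(1-\theta)\gamma_{E,th}^{s_N}]}\right)\,\mathbb{E}\!\left[\exp\!\bigl(-s\,|h_{SB}|^2|h_{BV}|^2\bigr)\right],
\end{align*}
with $s=\tfrac{\gamma_{E,th}^{s_N}\beta^2}{[\theta a_N-(1-\theta)\gamma_{E,th}^{s_N}]\lambda_{SV}}$. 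The inner expectation is the Laplace transform of a product of two independent exponentials, which I would evaluate by integrating out $|h_{BV}|^2$ first (yielding the kernel $(1+s\lambda_{SB}|h_{BV}|^2)^{-1}$) and then applying the standard identity $\int_0^{\infty}\tfrac{e^{-u}}{1+bu}\,du=-\tfrac{1}{b}e^{1/b}\,\mathrm{Ei}(-1/b)$. Using the assumption $d_{SV}\approx d_{BV}$, the reciprocal $1/b$ collapses to the distance-free quantity $\mathcal{A}_{e1}$, so the expectation becomes $-\mathcal{A}_{e1}e^{\mathcal{A}_{e1}}\mathrm{Ei}(-\mathcal{A}_{e1})=-\mathcal{G}_{e1}$.

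Finally I would plug this back into the PGFL and convert to polar coordinates with area element $2\pi r\,dr$:
\begin{align*}
P_{\text{int},N}=1-\exp\!\left(2\pi\lambda_e\,\mathcal{G}_{e1}\int_{r_p}^{\infty} \exp\!\bigl(-r^{\alpha}\mathcal{M}_{e1}\bigr)\,r\,dr\right),
\end{align*}
where I have used $\lambda_{SV}=r^{-\alpha}$ to identify $\tfrac{\gamma_{E,th}^{s_N}}{\gamma\lambda_{SV}[\theta a_N-(1-\theta)\gamma_{E,th}^{s_N}]}=r^{\alpha}\mathcal{M}_{e1}$. The substitution $t=\mathcal{M}_{e1}r^{\alpha}$ transforms the remaining one-dimensional integral into an upper incomplete gamma $\Gamma(2/\alpha,\mathcal{M}_{e1}r_p^{\alpha})$ with the prefactor advertised in (\ref{IP_UN}), completing the proof. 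The only genuinely delicate step is the Laplace-transform evaluation over the double-Rayleigh backscatter term; everything else is the same PGFL-plus-polar-change-of-variables pattern already executed in Appendix~\ref{Proof_of_theorem_IPF}.
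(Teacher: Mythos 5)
Your proposal is correct and follows essentially the same route as the paper, whose own proof of this theorem simply repeats the Appendix~E argument for $\gamma_V^{s_N}$: condition on the double-Rayleigh backscatter term, evaluate the resulting $\mathrm{Ei}$-type Laplace transform (collapsing $\lambda_{BV}$ onto $\lambda_{SV}$ via $d_{SV}\approx d_{BV}$ to obtain $\mathcal{A}_{e1}$ and $\mathcal{G}_{e1}$), apply the PGFL, and integrate in polar coordinates. One minor remark: the substitution $t=\mathcal{M}_{e1}r^{\alpha}$ actually produces the prefactor $\frac{1}{\alpha\mathcal{M}_{e1}^{2/\alpha}}$, which coincides with the advertised $\frac{1}{\alpha\mathcal{M}_{e1}}$ only for $\alpha=2$ (the value used in the numerical section), so this small discrepancy lies in the theorem's stated formula rather than in your derivation.
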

\begin{proof}
	The proof is similar to Appendix \ref{Proof_of_theorem_IPF}, and thus omitted here due to space limitation.
\end{proof}
\subsection{IP of BD}
\begin{theorem}\label{IP_BD}
The IP of BD can be derived as (\ref{exp_of_IPC}), shown at the top of this page,
where $n$ and $N$ represent the calculation accuracy; $l_i$ is the $i$-th root of the Laguerre polynomial, $L_{N}(l)$;    $\mathcal{A}_{e3}$, $\mathcal{B}_{e2}$, $\mathcal{B}_{e3}$, $\mathcal{C}_{e3}$, $\mathcal{G}_{e2}$, 
 $t_i$, and $w_i$ have been defined in Table~\ref{IP_parameters}.
\end{theorem}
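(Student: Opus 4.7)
The plan is to apply the PGFL identity from equation (\ref{IP_define}) with $i=C$, which reduces the proof to (i) computing the per-Eve complementary CDF $\bar{F}(r)\triangleq 1-F_{\gamma_V^{s_C}}(\gamma_{E,th}^{s_C})$ parameterized by the Eve distance $r=d_{SV}\approx d_{BV}$, and (ii) evaluating the resulting two-dimensional spatial integral over the Eve area $\{r\geq r_p\}$ outside the protected disc. First, rearranging $\gamma_V^{s_C}>\gamma_{E,th}^{s_C}$ yields $|h_{SB}|^2|h_{BV}|^2\gamma\delta>\gamma_{E,th}^{s_C}\bigl[(1-\theta)\gamma|h_{SV}|^2+1\bigr]$, with $\delta=\beta^2[\theta-(1-\theta)\gamma_{E,th}^{s_C}]$ as defined in Table~\ref{IP_parameters}. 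If $\delta\leq 0$, equivalently $\theta/(1-\theta)\leq\gamma_{E,th}^{s_C}$, the left side is non-positive while the right side is strictly positive, the event is impossible, and $P_{\text{int},C}=0$, producing the second branch of (\ref{exp_of_IPC}).

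For the non-trivial regime $\delta>0$, I would evaluate the CCDF by first taking the conditional Laplace transform over the exponential $|h_{SV}|^2$; after conditioning on $|h_{SB}|^2=z$ and using $|h_{BV}|^2$ exponential, this produces the rational factor $z\delta\lambda_{BV}/(z\delta\lambda_{BV}+\gamma_{E,th}^{s_C}(1-\theta)\lambda_{SV})$ multiplied by $\exp(-\text{const}/z)$. Decomposing the rational factor as $1-c/(z\delta\lambda_{BV}+c)$ splits the remaining $z$-integral. The ``$1$'' piece combines $\exp(-z/\lambda_{SB})$ with $\exp(-\text{const}/z)$ and, after completing the algebra, reduces to an exponential-integral expression of the form $\mathcal{A}_{e2}\exp(\mathcal{A}_{e2})\,\mathrm{Ei}(-\mathcal{A}_{e2})$; this is precisely the $\mathcal{G}_{e2}\exp(\mathcal{B}_{e2})$ contribution. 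The residual piece is a genuinely Bessel-type integral for which I would rescale onto a finite interval and apply Gauss--Chebyshev quadrature, which accounts for the $\pi/n$ prefactor, the $\sqrt{1-t_i^2}$ weights, the $(t_i+1)/2$ shift appearing in $\mathcal{B}_{e3}$ and $\mathcal{C}_{e3}$, and the $K_0(\mathcal{C}_{e3})$ kernel arising through the classical identity $\int_0^\infty x^{-1}\exp(-ax-b/x)\,dx=2K_0(2\sqrt{ab})$.

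With $\bar{F}(r)$ thus written as $1+\mathcal{G}_{e2}\exp(\mathcal{B}_{e2})-\mathcal{A}_{e3}\sum_i\sqrt{1-t_i^2}\bigl(1-\exp(-\mathcal{B}_{e3})\bigr)K_0(\mathcal{C}_{e3})$, and all the radial dependence packaged through $\lambda_{SV}=\lambda_{BV}=(l+r_p)^{-\alpha}$ inside $\mathcal{B}_{e2}$, $\mathcal{A}_{e3}$, $\mathcal{B}_{e3}$, and $\mathcal{C}_{e3}$, the PGFL spatial integral in (\ref{IP_define}) reduces via polar coordinates and the shift $r=l+r_p$ to $2\pi\int_0^\infty(l+r_p)\bar{F}(l+r_p)\,dl$. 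Because this integrand decays only polynomially in $l$, I would multiply and divide by $e^{-l}$ so that Gauss--Laguerre quadrature applies to the bracketed $e^{l}(l+r_p)\bar{F}(l+r_p)$ factor, giving $\sum_i w_i e^{l_i}(l_i+r_p)\bar{F}(l_i+r_p)$ and reproducing exactly the $w_i(l_i+r_p)\exp(l_i)$ structure in (\ref{exp_of_IPC}).

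The hard part will be the inner CCDF: the denominator of $\gamma_V^{s_C}$ couples $|h_{SV}|^2$ to the product $|h_{SB}|^2|h_{BV}|^2$ both additively and through the $(1-\theta)$ AN term, so no fully closed form exists and the order of conditioning must be chosen carefully to cleanly separate the $\mathrm{Ei}$ contribution from the $K_0$ contribution. A secondary subtlety is justifying Gauss--Laguerre on an integrand whose spatial tail is only polynomial; this is a standard device in stochastic-geometry eavesdropper analyses and is legitimate here because $K_0$ decays exponentially in its argument (controlled by $(l+r_p)^{\alpha/2}$) and the integrand is bounded at $l=0$ thanks to the protected-zone radius $r_p>0$.
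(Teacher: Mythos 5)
Your outer skeleton matches the paper's proof in Appendix~\ref{Proof_of_theorem_IPC} exactly: the PGFL reduction from (\ref{IP_define}), the degenerate case $\delta\le 0$ (i.e., $\theta/(1-\theta)\le\gamma_{E,th}^{s_C}$) giving $P_{\text{int},C}=0$, the polar-coordinate shift $l=r-r_p$, and the Gauss--Laguerre treatment of the radial integral with the $w_i(l_i+r_p)\exp(l_i)$ weights are all as in the paper. The gap is in the inner CCDF computation. Conditioning on $|h_{SB}|^2=z$ and integrating out $|h_{BV}|^2$ and $|h_{SV}|^2$ does give the rational factor $z\delta\lambda_{BV}/\bigl(z\delta\lambda_{BV}+\gamma_{E,th}^{s_C}(1-\theta)\lambda_{SV}\bigr)$ times $\exp(-b/z)$, but the resulting $z$-integral of your ``$1$'' piece is $\int_0^\infty \lambda_{SB}^{-1}\exp(-z/\lambda_{SB}-b/z)\,dz=2\sqrt{b/\lambda_{SB}}\,K_1\bigl(2\sqrt{b/\lambda_{SB}}\bigr)$, a $K_1$ Bessel function, not the exponential-integral term $\mathcal{G}_{e2}\exp(\mathcal{B}_{e2})$ you claim it becomes; and your residual piece is a rational-weighted integral over the \emph{infinite} interval $[0,\infty)$ containing no $K_0$ kernel and no natural truncation point, so it is not the object to which Gauss--Chebyshev is applied in (\ref{exp_of_IPC}). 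Your decomposition computes the same probability, but it cannot land on the stated formula.

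The paper's conditioning order is the opposite of yours: it treats the product $\nu_V=|h_{SB}|^2|h_{BV}|^2$ as a single random variable with PDF $\tfrac{2}{\lambda_{SB}\lambda_{BV}}K_0\bigl(2\sqrt{v/(\lambda_{SB}\lambda_{BV})}\bigr)$, integrates out the exponential $|h_{SV}|^2$ first conditioned on $\nu_V$ (yielding $1-\exp\bigl(-\tfrac{v\delta}{\lambda_{SV}x(1-\theta)}+\tfrac{1}{\lambda_{SV}(1-\theta)\gamma}\bigr)$ on the event $\nu_V>x/(\gamma\delta)$), and then splits the remaining $\nu_V$-integral as $\int_0^{\infty}-\int_0^{x/(\gamma\delta)}$. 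The full-range part is exactly [Gradshteyn, Eq.\ (6.614.4)] and produces the $\mathrm{Ei}$ term $\mathcal{G}_{e2}\exp(\mathcal{B}_{e2})$; the finite-range correction is the $K_0$-weighted integral over $[0,x/(\gamma\delta)]$ that Gauss--Chebyshev converts into $\mathcal{A}_{e3}\sum_i\sqrt{1-t_i^2}\bigl(1-\exp(-\mathcal{B}_{e3})\bigr)K_0(\mathcal{C}_{e3})$. You correctly anticipated that ``the order of conditioning must be chosen carefully,'' but the order you chose is the one that fails; to complete the proof you must switch to the product-variable ordering (together with the $\lambda_{SV}\approx\lambda_{BV}=(l+r_p)^{-\alpha}$ substitution, which you do state) before invoking the two quadrature rules.
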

\begin{proof}
See Appendix \ref{Proof_of_theorem_IPC}.
\end{proof}
 \subsection{Asymptotic Analysis}
The intercept performance of Eves at high SNR plays a crucial role for effective PLS design. Accordingly, in this subsection, we present the asymptotic expressions for the IP in the high SNR regime, leading to the following corollaries.

\begin{corollary}\label{asy_IP_UF}	
	At high SNR, the asymptotic IP of $U_F$ can be derived as
\begin{align}
P_{{\rm int},F}^{\text{asy}}\!\!=\!\!
	1\!\!-\!\!\exp\!\left(\lambda_e\mathcal{G}_{e0}2\pi\frac{\Gamma\left(\frac{2}{\alpha}\right) \!\!-\!\! \sum_{n=0}^{\infty}  \frac{(-1)^n\left(\mathcal{M}_{e0} r_p^\alpha\right)^{\frac{2}{\alpha} + n}}{n!(\frac{2}{\alpha}+n)}}{\alpha\mathcal{M}_{e0}}\right),
\end{align}
when $\frac{\theta a_F}{\theta a_N+1-\theta}>\gamma_{E,th}^{s_F}$.
\end{corollary}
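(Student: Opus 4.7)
The plan is to start from the exact formula for $P_{\text{int},F}$ established in Theorem 5, namely equation (\ref{IP_UF}), and expand the upper incomplete gamma function $\Gamma(\frac{2}{\alpha}, \mathcal{M}_{e0} r_p^\alpha)$ in a small-argument series. Inspecting the definition of $\mathcal{M}_{e0}$ in Table \ref{IP_parameters}, one sees that $\mathcal{M}_{e0}$ is inversely proportional to $\gamma$, so its second argument $\mathcal{M}_{e0} r_p^\alpha$ tends to zero as $\gamma \to \infty$. Thus the task reduces to expanding $\Gamma(s,\cdot)$ near zero with $s = \frac{2}{\alpha}$, while the prefactor $\mathcal{G}_{e0}$ and the normalizing $\alpha \mathcal{M}_{e0}$ in the denominator are kept as-is, matching the form of the stated corollary.

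Next, I would invoke the standard identity $\Gamma(s,x) = \Gamma(s) - \gamma(s,x)$, where $\gamma(s,x)$ is the lower incomplete gamma function, and combine it with the globally convergent power series $\gamma(s,x) = \sum_{n=0}^{\infty} \frac{(-1)^n x^{s+n}}{n!(s+n)}$. Specializing to $s = \frac{2}{\alpha}$ and $x = \mathcal{M}_{e0} r_p^\alpha$ gives $\Gamma(\frac{2}{\alpha}, \mathcal{M}_{e0} r_p^\alpha) = \Gamma(\frac{2}{\alpha}) - \sum_{n=0}^{\infty} \frac{(-1)^n (\mathcal{M}_{e0} r_p^\alpha)^{2/\alpha + n}}{n!(2/\alpha + n)}$. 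Substituting this expansion into (\ref{IP_UF}) reproduces the claimed expression verbatim. The validity condition $\frac{\theta a_F}{\theta a_N + 1 - \theta} > \gamma_{E,th}^{s_F}$ is inherited unchanged from Theorem 5, since no further approximation is introduced that could alter the feasibility region.

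The only step requiring care is the interpretation: because the series for $\gamma(s,x)$ converges for every $x \ge 0$, the substitution is in fact an exact algebraic rewriting of (\ref{IP_UF}), and the "asymptotic" label is justified by the behavior of the individual terms in the limit. The leading term $(\mathcal{M}_{e0} r_p^\alpha)^{2/\alpha} \sim \gamma^{-2/\alpha}$ dominates, while every subsequent term carries an extra factor of $\mathcal{M}_{e0} r_p^\alpha \sim \gamma^{-1}$ and is therefore negligible at high SNR. Continuity of $\exp(\cdot)$ allows the expansion to be carried through the outer exponential without any supporting bound, so no further estimates are needed to close the argument; this mirrors the Taylor-expansion strategy used in Corollaries \ref{asy_OPF}--\ref{asy_OPC}.
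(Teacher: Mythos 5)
Your proof is correct and follows essentially the same route as the paper: both substitute the small-argument series expansion of $\Gamma\left(\frac{2}{\alpha},\mathcal{M}_{e0}r_p^{\alpha}\right)$ (valid since $\mathcal{M}_{e0}\propto\gamma^{-1}\rightarrow 0^{+}$) into the exact expression of Theorem \ref{theorem_IPF}. Your added observation that the series for the lower incomplete gamma function is globally convergent, so the substitution is an exact rewriting rather than an approximation, is a correct and slightly sharper framing than the paper's.
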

\begin{proof}
As $\gamma \rightarrow \infty$, we observe that $\mathcal{M}_{e0} \rightarrow 0^{+}$. Since $\frac{2}{\alpha} > 0$, we can employ the asymptotic series expansion \cite{bender2013advanced}:
\begin{align}
	\Gamma\left(\frac{2}{\alpha}, \mathcal{M}_{e0} r_p^\alpha\right) \approx \Gamma\left(\frac{2}{\alpha}\right) \!-\! \sum_{n=0}^{\infty} (-1)^n \frac{\left(\mathcal{M}_{e0} r_p^\alpha\right)^{\frac{2}{\alpha} + n}}{n!(\frac{2}{\alpha}+n)},
\end{align}
completing the proof.	
\end{proof}	

\begin{corollary}
	At high SNR, the asymptotic IP of $U_N$ can be derived as
	\begin{align}
		P_{{\rm int},N}^{\text{asy}}\!\!=\!\!
		1\!\!-\!\!\exp\!\left(\lambda_e\mathcal{G}_{e1}2\pi\frac{\Gamma\left(\frac{2}{\alpha}\right) \!\!-\!\! \sum_{n=0}^{\infty}  \frac{(-1)^n\left(\mathcal{M}_{e1} r_p^\alpha\right)^{\frac{2}{\alpha} + n}}{n!(\frac{2}{\alpha}+n)}}{\alpha\mathcal{M}_{e1}}\right),
	\end{align}
	when
	$\frac{\theta a_N}{1-\theta}>\gamma_{E,th}^{s_N}$.
\end{corollary}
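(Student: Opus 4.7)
The plan is to mirror the argument used in \textbf{Corollary \ref{asy_IP_UF}} almost verbatim, since the exact IP expression for $U_N$ in (\ref{IP_UN}) has the same structural form as that for $U_F$ in (\ref{IP_UF}); only the parameters $\mathcal{G}_{e0},\mathcal{M}_{e0}$ are replaced by $\mathcal{G}_{e1},\mathcal{M}_{e1}$, and the activation condition changes accordingly. So I would first restrict attention to the nontrivial regime $\frac{\theta a_N}{1-\theta}>\gamma_{E,th}^{s_N}$, where the nonzero branch of (\ref{IP_UN}) applies; in the complementary regime the IP is identically zero and nothing needs to be proved.

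Next, I would examine the behavior of $\mathcal{M}_{e1}=\frac{\gamma_{E,th}^{s_N}}{\gamma[\theta a_N-(1-\theta)\gamma_{E,th}^{s_N}]}$ as $\gamma\to\infty$. Under the stated condition, the denominator is strictly positive and independent of $\gamma$, hence $\mathcal{M}_{e1}\to 0^{+}$ at rate $1/\gamma$. Consequently, $\mathcal{M}_{e1}r_p^{\alpha}\to 0^{+}$, which is precisely the regime in which the upper incomplete gamma function admits the small-argument series expansion
\begin{align}
\Gamma\!\left(\tfrac{2}{\alpha},\mathcal{M}_{e1}r_p^{\alpha}\right)\approx \Gamma\!\left(\tfrac{2}{\alpha}\right)-\sum_{n=0}^{\infty}(-1)^{n}\frac{\bigl(\mathcal{M}_{e1}r_p^{\alpha}\bigr)^{\tfrac{2}{\alpha}+n}}{n!\bigl(\tfrac{2}{\alpha}+n\right)},\nonumber
\end{align}
which is valid because $\tfrac{2}{\alpha}>0$ (standard path-loss exponent $\alpha>0$) and thus none of the terms $\tfrac{2}{\alpha}+n$ hit the poles of the incomplete gamma function. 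Note that the parameters $\mathcal{G}_{e1}$ and the prefactor $\tfrac{2\pi}{\alpha\mathcal{M}_{e1}}$ do not involve the small quantity $\mathcal{M}_{e1}r_p^{\alpha}$ in a way that obstructs the expansion, so the series substitution is clean.

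Finally, I would substitute this expansion into (\ref{IP_UN}) and leave the outer $1-\exp(\cdot)$ wrapper intact, yielding the claimed closed-form $P_{{\rm int},N}^{\text{asy}}$. The only subtlety — and the one point that warrants care — is confirming that $\mathcal{G}_{e1}$ itself does not vanish or blow up in the high-SNR limit: inspecting its definition in Table \ref{IP_parameters}, $\mathcal{A}_{e1}$ depends only on fixed parameters ($\theta,a_N,\eta,\beta,\lambda_{SB},\gamma_{E,th}^{s_N}$) and is independent of $\gamma$, so $\mathcal{G}_{e1}=\mathcal{A}_{e1}\exp(\mathcal{A}_{e1})\mathrm{Ei}(-\mathcal{A}_{e1})$ is a finite nonzero constant. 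This makes the overall substitution legitimate and completes the derivation. The main (mild) obstacle is bookkeeping of the constants $\mathcal{G}_{e1}$ and $\mathcal{M}_{e1}$ to ensure the asymptotic form is expressed in the same compact notation as in \textbf{Corollary \ref{asy_IP_UF}}; no new analytic tool beyond the incomplete-gamma series expansion is required.
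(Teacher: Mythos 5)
Your proposal is correct and matches the paper's approach exactly: the paper proves the $U_F$ analogue (Corollary \ref{asy_IP_UF}) by noting $\mathcal{M}_{e0}\rightarrow 0^{+}$ as $\gamma\rightarrow\infty$ and invoking the small-argument series of the upper incomplete gamma function, and it leaves the $U_N$ case as the identical substitution $\mathcal{G}_{e0},\mathcal{M}_{e0}\rightarrow\mathcal{G}_{e1},\mathcal{M}_{e1}$ that you carry out. Your added check that $\mathcal{G}_{e1}$ is a finite, $\gamma$-independent constant is a sensible (if implicit in the paper) verification and introduces nothing beyond the paper's argument.
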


\begin{corollary}\label{asy_IP_BD}
	At high SNR, when $\frac{\theta}{1-\theta}>\gamma_{E,th}^{s_C}$, the asymptotic IP of BD can be derived as (\ref{IP_BD_asy}), shown at the top of this page.
\end{corollary}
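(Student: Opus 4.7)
The plan is to derive (\ref{IP_BD_asy}) by taking the exact expression (\ref{exp_of_IPC}) from \textbf{Theorem \ref{IP_BD}} and applying asymptotic expansions to the three $\gamma$-dependent quantities that vanish as $\gamma\to\infty$. From their definitions in Table~\ref{IP_parameters}, one observes that $\mathcal{B}_{e2}=\frac{(l_i+r_p)^\alpha}{(1-\theta)\gamma}$ scales as $\gamma^{-1}$, that $\mathcal{B}_{e3}$ is a sum of terms each proportional to $\gamma^{-1}$, and that $\mathcal{C}_{e3}$ scales as $\gamma^{-1/2}$. Hence, in the high-SNR regime, all three arguments are small, so standard small-argument expansions are legitimate and the guard condition $\frac{\theta}{1-\theta}>\gamma_{E,th}^{s_C}$ (which makes $\delta>0$) is inherited directly from \textbf{Theorem \ref{IP_BD}}.

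Next I would apply the two elementary asymptotics used earlier in \textbf{Corollary~\ref{asy_OPC}}: namely, $\exp(x)\approx 1+x$ as $x\to 0$ applied to $\exp(\mathcal{B}_{e2})$ and to $\exp(-\mathcal{B}_{e3})$, together with the modified Bessel expansion $K_0(x)\approx -\ln(x/2)$ as $x\to 0^{+}$ applied to $K_0(\mathcal{C}_{e3})$. Specifically, $\exp(\mathcal{B}_{e2})\to 1+\mathcal{B}_{e2}$ replaces the first bracketed contribution, while $1-\exp(-\mathcal{B}_{e3})\to\mathcal{B}_{e3}$ replaces the second; then $-K_0(\mathcal{C}_{e3})\to\ln(\mathcal{C}_{e3}/2)$ flips the sign of the Bessel term, matching the $+\mathcal{A}_{e3}\sum\sqrt{1-t_i^2}\,\ln(\mathcal{C}_{e3}/2)\mathcal{B}_{e3}$ structure in (\ref{IP_BD_asy}).

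Finally, substituting these three leading-order replacements directly into the inner bracket of (\ref{exp_of_IPC}) and leaving the outer Gauss--Laguerre sum $\sum_{i=1}^{N}w_i(l_i+r_p)\exp(l_i)$, the PGFL factor $-2\pi\lambda_e$, and the exponent structure $1-\exp\{\cdot\}$ untouched yields exactly (\ref{IP_BD_asy}). No new integration is required; the Laguerre quadrature that was introduced in the proof of \textbf{Theorem \ref{IP_BD}} to handle the radial integral over the Eve-exclusion complement commutes with the pointwise limits in $\gamma$, since the integrand is dominated uniformly in $\gamma\geq\gamma_0$ for any fixed $\gamma_0>0$ by integrable envelopes in $l_i$.

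The main obstacle is mostly bookkeeping rather than analytical difficulty: one must track the sign arising from $-K_0(\mathcal{C}_{e3})\to\ln(\mathcal{C}_{e3}/2)$ and ensure that the retained order of expansion (linear in $\mathcal{B}_{e2}$ and $\mathcal{B}_{e3}$, logarithmic in $\mathcal{C}_{e3}$) is consistent across all terms so that no lower-order contribution is inadvertently discarded. Since the surviving $\ln(\mathcal{C}_{e3}/2)$ term decays only logarithmically in $\gamma$ while the $\mathcal{B}_{e2}$ and $\mathcal{B}_{e3}$ corrections decay polynomially, the expression correctly captures the dominant non-vanishing correction to the eavesdropping probability and, analogously to \textbf{Remark \ref{remark1}}, makes transparent that $P_{\text{int},C}$ does not drive to a trivial limit but retains a structured high-SNR floor governed by $\lambda_e$, $r_p$, $\beta$, $\theta$, and $\gamma_{E,th}^{s_C}$.
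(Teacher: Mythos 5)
Your proposal is correct and follows essentially the same route as the paper, whose proof consists precisely of applying $\exp(x)\approx 1+x$, $\exp(-x)\approx 1-x$, and $K_0(x)\approx-\ln(x/2)$ to the three small arguments $\mathcal{B}_{e2}$, $\mathcal{B}_{e3}$, and $\mathcal{C}_{e3}$ inside the bracket of (\ref{exp_of_IPC}), with the sign flip from $-K_0$ handled exactly as you describe. The only caveat is in your closing interpretive remark: the retained Bessel contribution $\mathcal{A}_{e3}\ln(\mathcal{C}_{e3}/2)\mathcal{B}_{e3}$ actually decays like $\gamma^{-2}\ln\gamma$ (since both $\mathcal{A}_{e3}$ and $\mathcal{B}_{e3}$ scale as $\gamma^{-1}$), so the high-SNR limit of the bracket is governed by the $\gamma$-independent term $1+\mathcal{G}_{e2}$ rather than by a logarithmically decaying correction, but this does not affect the validity of the derivation itself.
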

\begin{proof}
 The above equation can be derived by using $\exp(-x) \approx 1 - x$, $\exp(x) \approx 1 + x$, and $K_0(x)\approx-\ln(\frac{x}{2})$, when $x \to 0$.
\end{proof}

\begin{remark}\label{remark2}
{\rm	From theorems \ref{theorem_IPF} to \ref{IP_BD}, and corollaries \ref{asy_IP_UF} to \ref{asy_IP_BD}, we can find the following observations: 
Unlike OP, IP cannot be affected by the backscattered AN cancellation efficiency $\eta$, as Eve lacks the capability to cancel AN. However, IP is dependent on parameters such as the power allocation coefficient $a_N$ (or $a_F$), the power fraction $\theta$, and the reflection efficiency $\beta$. Additionally, it can be affected by the Eve distribution density $\lambda_e$ and the radius of  Eve-exclusion area $r_p$. 
Recall that the exponent terms in $P_{{\rm int},F}$, $P_{{\rm int},N}$, and $P_{{\rm int},C}$ can be given by $\exp(\lambda_e \cdot \mathcal{C})$, where $\mathcal{C}$ is a negative constant. As $\lambda_e$ increases, $\exp(\lambda_e \cdot\mathcal{C})$ decreases, causing IP to approach  1. Consequently, enlarging $r_p$ becomes especially critical in PLS design.	}
\end{remark}

\section{Numerical Results}
In this section, we confirm the accuracy of  theoretical expressions through Monte Carlo simulations. Unless otherwise specified, the simulation is configured with the following parameters  \cite{liu2016cooperative,liu2017enhancing,zhang2019backscatter,li2021hardware,li2021cognitive,li2022improving,pei2024secrecy,peiwcl}: Power allocation coefficients $a_N = 0.2$ and $a_F = 0.8$; reflection efficiency $\beta = 0.5$; power fraction $\theta = 0.9$; attenuation factor $\eta = 0.5$; distances between legitimate nodes: $d_{SF} = d_{SB} = 2$, $d_{SN} = 1$, $d_{BN} = 1.1$, $d_{BF} = 1.5$; path loss exponent $\alpha = 2$; SINR thresholds: $\gamma_{F,th}^{s_F} = \gamma_{N,th}^{s_F} = \gamma_{N,th}^{s_N} = 0.1$, $\gamma_{N,th}^{s_C} = 0.05$, $\gamma_{E,th}^{s_i} = 0.1$, where $i\in\{N,F,C\}$; Eve distribution density $\lambda_e=10^{-4}$;
radius of the Eve-exclusion area $r_p = 10$~m and the outer radius of the Eve area is 1000 m.\footnote{Recall that Eves are randomly distributed in the Eve area (an annular region extending from $r_p$ to $\infty$). However, $\infty$ is impractical for simulation. Therefore, following \cite{liu2017enhancing}, we replace $\infty$ with 1000 m. This approximation is reasonable since 1000 m is much larger than $r_p$ and can be considered as infinite.}

\begin{figure}[t]
	\centering
	\includegraphics[width=3.5in]{./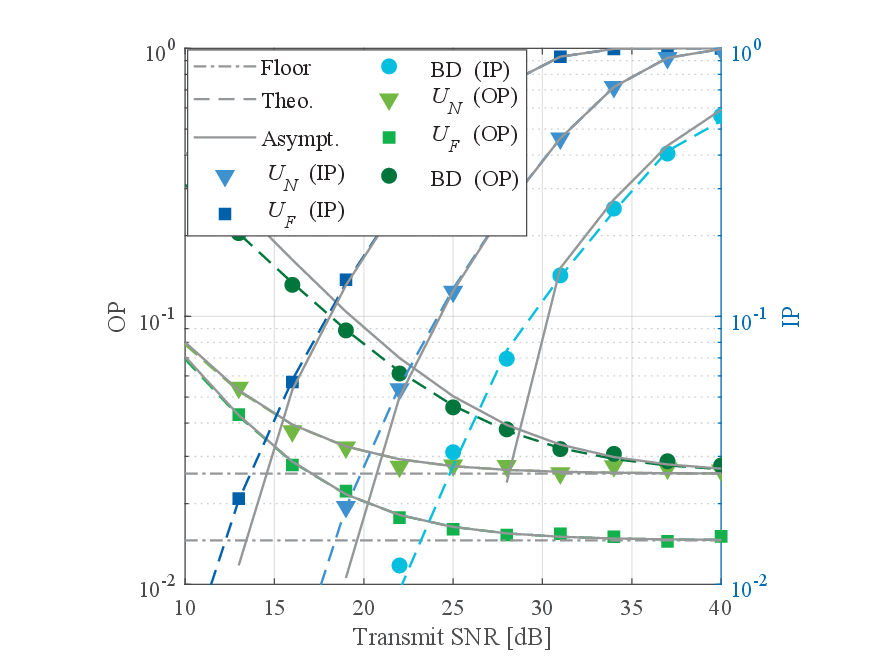}
	\caption{OP and IP versus transmit SNR.} 	
	\label{IP_OP_VS_SNR}
\end{figure}

In Fig. \ref{IP_OP_VS_SNR}, we portray the IP and OP for different users with respect to the transmit SNR $\gamma$ at BS. Clearly, the theoretical results align well with the simulation ones, and the asymptotic results closely approximate the exact ones in the high SNR regime, demonstrating their accuracy. It can be observed that all the IP curves steadily increase as $\gamma$ rises. In contrast, the OP curves decrease with increasing $\gamma$. However, at high SNR, the OP converges to fixed non-zero floor values, which ultimately results in zero diversity order. Moreover, we can observe that the OPs of $U_N$ and BD converge to the same floor at high SNR, verifying \textbf{Remark \ref{remark1}}. 
On the other hand, it can be observed that the OP of $U_F$ exhibits the best performance, followed by $U_N$, while BD shows the worst performance. This behavior comes from the different SIC processes for the two users and BD.
Similarly, the IP performance of $U_N$, $U_F$, and BD follows the same order, which is determined by the differences in power allocation coefficients and channel characteristics.

\begin{figure}[t]
	\centering
	\includegraphics[width=3.5in]{./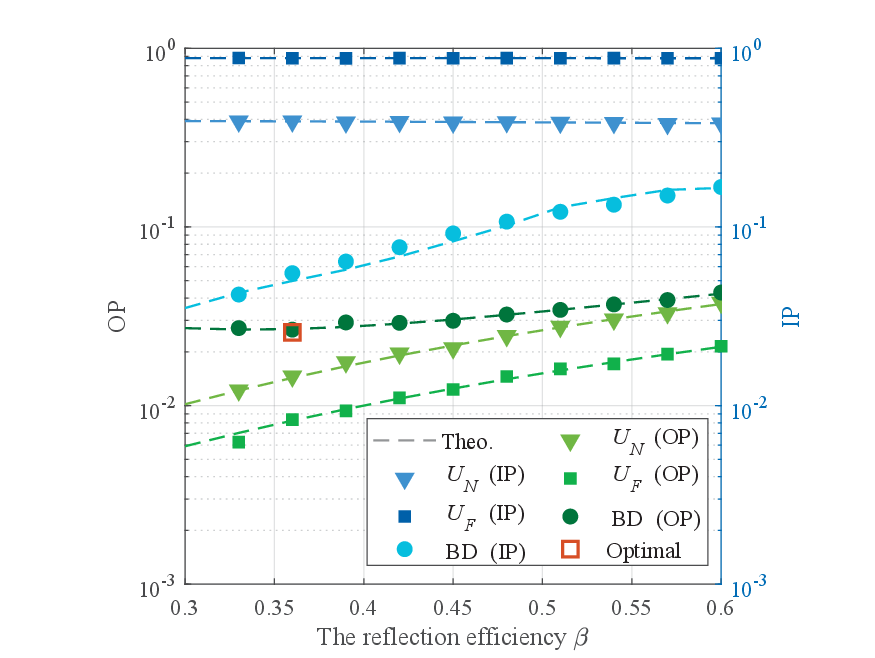}
	\caption{OP and IP versus reflection efficiency.} 	
	\label{IP_OP_VS_beta}
\end{figure}
Fig. \ref{IP_OP_VS_beta} depicts the dependence between IP and OP versus the reflection efficiency $\beta$ when transmit SNR $\gamma=30$ dB. It can be observed that the curves of OP for $U_N$ and $U_F$ increase with increasing $\beta$, while the curves of IP decrease. This happens due to the greater interference caused by the backscatter link as $\beta$ increases. On the other hand, the IP of BD increases as $\beta$ increases, due to the fact that the stronger backscatter link enhances the eavesdropping SINR towards  $s_C$. The OP of BD, however, initially decreases to an optimal point  and then increases. 
Recall that the OP of BD is jointly affected by the SINRs of $s_N$, $s_F$, and $s_C$. Referring to (\ref{define_of_gammaUN_sF}), (\ref{define_of_gammaUN_sN}), and (\ref{define_gammaUNSC}), one can find that the SINRs of $s_F$ and $s_N$ decrease, while the SINR of $s_C$ increases  with stronger backscatter link. When $\beta$ is small, the increase in the SINR of $s_C$ plays a dominant role, while as $\beta$ increases, the reduction in the SINRs of $s_N$ and $s_F$ becomes non-negligible, leading to increasing OP. Therefore, selecting an appropriate $\beta$ is essential for PLS design.

\begin{figure}[t]
	\centering
	\includegraphics[width=3.5in]{./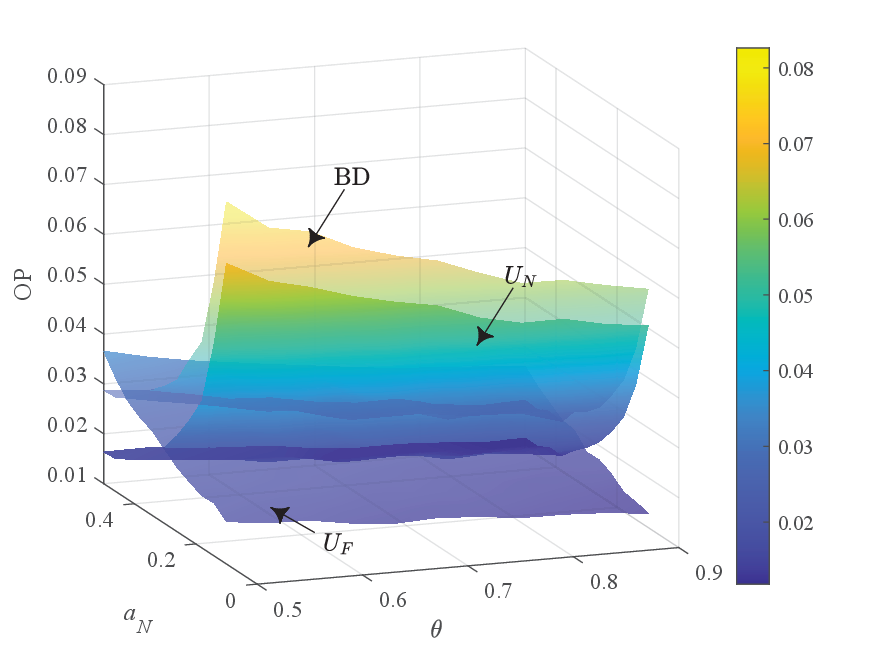}
	\caption{OP versus power allocation coefficient $a_N$ and power fraction $\theta$.} 	
	\label{OP_VS_aN_and_theta}
\end{figure}

The OPs of different users versus power allocation coefficient $a_N$ and power fraction $\theta$ when transmit SNR $\gamma=30$ dB is plotted
in Fig. \ref{OP_VS_aN_and_theta}. It can be observed that with increasing $\theta$, the OP decreases. This is because a larger $\theta$ indicates that more power is allocated to the desired signal, thereby improving the outage performance of the system. Then we switch to the relationship of OP and $a_N$. Clearly, allocating more power to $U_N$ significantly improves its outage performance, but it results in a deterioration in the OP of $U_F$.
On the other hand, the outage performance of BD is jointly affected by those of $U_N$ and $U_F$. As shown in the figure, the performance trend of BD follows that of $U_N$, indicating that the outage performance of $U_N$ has a greater impact on BD than that of $U_F$.

\begin{figure}[t]
	\centering
	\includegraphics[width=3.5in]{./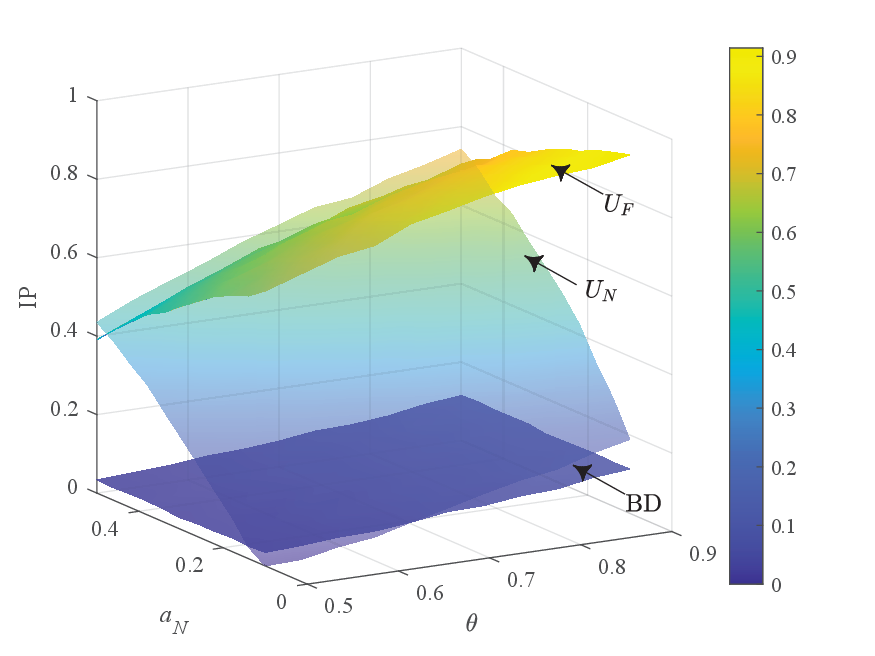}
	\caption{IP versus power allocation coefficient $a_N$ and power fraction $\theta$.} 	
	\label{IP_VS_aN_and_theta}
\end{figure}

In Fig. \ref{IP_VS_aN_and_theta}, the IP versus power allocation coefficient $a_N$ and power fraction $\theta$ when transmit SNR $\gamma=30$ dB is illustrated.  In contrast to Fig. \ref{OP_VS_aN_and_theta}, the IP curves increase monotonically with $\theta$, indicating that larger $\theta$ can degrade the secrecy performance. Recall that an increment in $\theta$ also enhances the outage performance. Thus, selecting an appropriate value of $\theta$ is crucial to achieve an optimal trade-off between security and reliability.
Next, we analyze the effect of $a_N$. It is evident that the IP of $U_F$ decreases as $a_N$ increases, whereas the IP of $U_N$ increases. Comparing Figs. \ref{OP_VS_aN_and_theta} and \ref{IP_VS_aN_and_theta}, we find that the corresponding IP and OP curves exhibit entirely opposite trends with respect to $a_N$.
Finally, the IP of BD remains unchanged, as it is independent of the power allocation coefficient.


\begin{figure}[t]
	\centering
	\includegraphics[width=3.5in]{./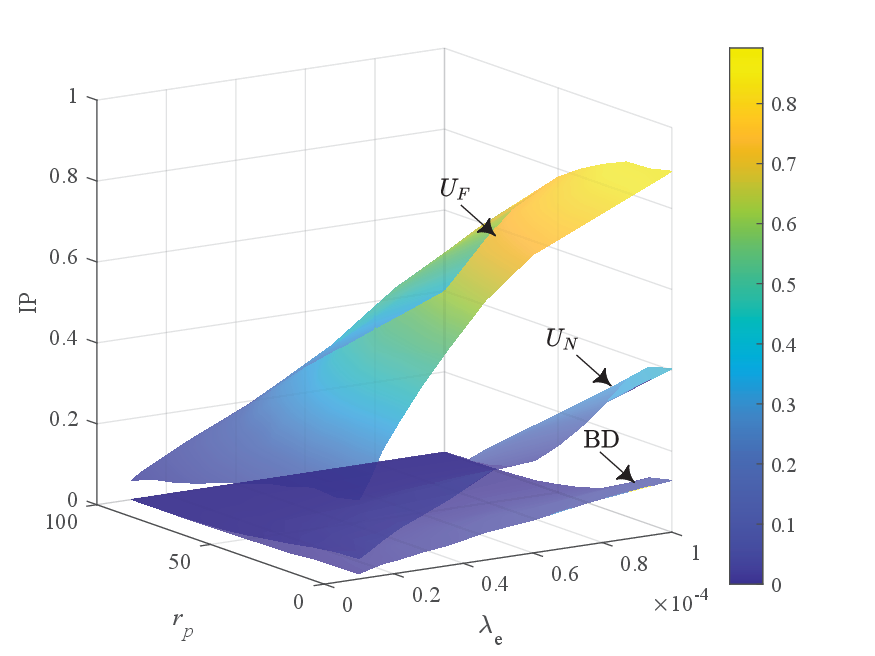}
	\caption{IP versus Eve-exclusion area radius $r_p$ and Eve distribution density $\lambda_e$.} 	
	\label{IP_VS_rp_and_lambda}
\end{figure}
Fig. \ref{IP_VS_rp_and_lambda} illustrates the impact of the Eve-exclusion area radius $r_p$ and Eve distribution density $\lambda_e$ on the IP when  transmit SNR $\gamma=30$ dB. Evidently, the IP decreases as $r_p$ increases. This phenomenon indicates that one effective approach to enhancing PLS is to enlarge the Eve-exclusion area. As expected in \textbf{Remark \ref{remark2}}, a lower Eve density $\lambda_e$ results in a reduced IP, demonstrating improved  secrecy performance. This is because a lower $\lambda_e$ results in fewer Eves, thereby weakening the multi-user diversity gain associated with the selection of the most detrimental Eve. Consequently, the harm inflicted by the most detrimental Eve is mitigated, reducing the IP.

\begin{figure}[t]
	\centering
	\includegraphics[width=3.5in]{./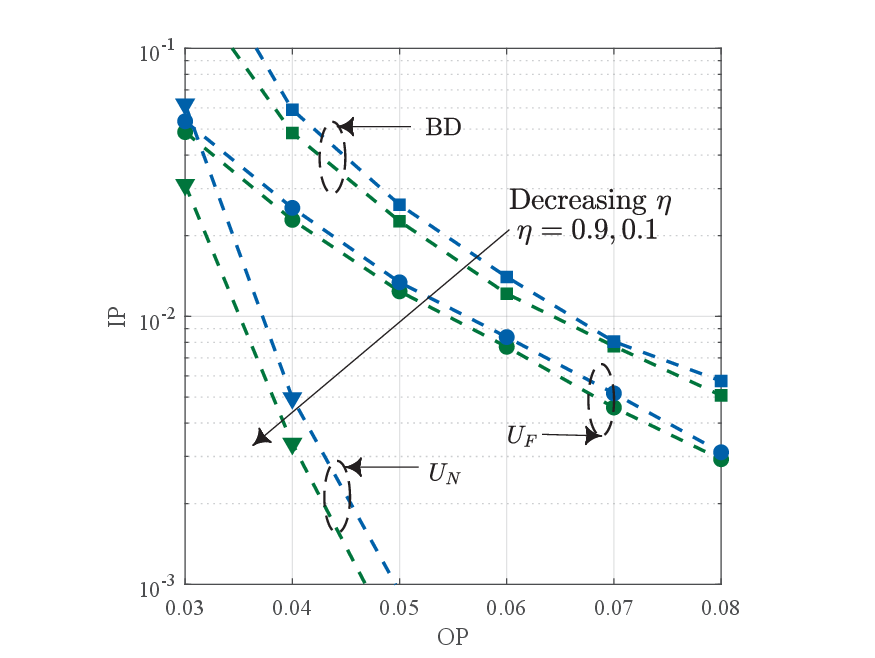}
	\caption{IP versus OP with different attenuation factor $\eta$.} 	
	\label{IP_VS_OP_with_eta}
\end{figure}

Fig. \ref{IP_VS_OP_with_eta} shows the impact of IP versus OP with different attenuation factor $\eta$. The results reveal an inverse relationship between OP and IP: as one increases, the other decreases. This confirms the trade-off between outage and intercept performances. Furthermore, it is evident that a smaller $\eta$ improves the reliability-security trade-off of the system, although marginally. This is because $\eta$ reflects the ability of legitimate nodes to cancel backscattered AN, while Eves struggle to mitigate it. Therefore, the effect of $\eta$ on the overall system performance is limited. Thus, to achieve an optimal security-reliability trade-off in the proposed system, more focus should be paid on the design of other parameters.

\section{Conclusions}
In this work, we have investigated an AmBC-NOMA system comprising a BS, a BD, two users, and multiple non-colluding passive HPPP-distributed Eves. The derived theoretical expressions for OP and IP have effectively characterized the system's PLS performance. We have also investigated the asymptotic OP and IP to better understand the system's behavior at high SNR.
Simulation results have validated the correctness of theoretical results. Besides, the effects of various factors, including transmit SNR, reflection efficiency, power allocation coefficient, power fraction, Eve-exclusion area radius, Eve distribution density, and attenuation factor, have also been observed.  
Finally, simulations have demonstrated that incorporating AN injection and establishing an Eve-exclusion area can significantly enhance the reliability and security of the AmBC-NOMA system.

\begin{appendices}
	\setcounter{equation}{0}
	\renewcommand{\theequation}{\thesection.\arabic{equation}}

\begin{figure*}[!t]
	\normalsize 
	\setcounter{MYtempeqncnt}{\value{equation}} 
	\begin{align}\label{first_simple_of_PoutC}
		P_{\text{out},C}&=1-\Pr\left(|h_{SN}|^2\geq \nu_N\max\left(\frac{\kappa_3}{\kappa_4},\frac{\kappa_5}{\kappa_6}\right)+\max\left(\frac{\gamma_{N,th}^{s_F}}{\kappa_4},\frac{\gamma_{N,th}^{s_N}}{\kappa_6}\right),\nu_N>\frac{\gamma_{N,th}^{s_C}}{\kappa_7}\right)\nonumber\\
		&=1-\underbrace{\Pr\left(|h_{SN}|^2\geq \nu_N\frac{\beta^2\Gamma\left[\theta+\eta(1-\theta)\right]}{\theta}+\frac{\Gamma}{\theta\gamma},\nu_N>\frac{\gamma_{N,th}^{s_C}}{\kappa_7}\right)}_{\mathcal{Q}}=1-\mathcal{Q}\nonumber		\tag{C.3}
	\end{align}
	\begin{align}\label{compose_of_Q}
		\mathcal{Q}=&\int_{\frac{\gamma_{N,th}^{s_C}}{\kappa_7}}^{+\infty}\int_{v\kappa_8+\frac{\Gamma}{\theta\gamma}}^{+\infty}f_{|h_{SN}|^2}(x)f_{\nu_N}(v)dxdv\nonumber\\
		=&\int_{\frac{\gamma_{N,th}^{s_C}}{\kappa_7}}^{+\infty}\exp\left(-v\frac{\kappa_8}{\lambda_{SN}}-\frac{\Gamma}{\lambda_{SN}\theta\gamma}\right)f_{\nu_N}(v)dv\nonumber\\
		=&\underbrace{\int_{0}^{+\infty}\exp\left(-v\frac{\kappa_8}{\lambda_{SN}}-\frac{\Gamma}{\lambda_{SN}\theta\gamma}\right)\frac{2}{\lambda_{SB} \lambda_{BN}} K_0\left(2\sqrt{\frac{v}{\lambda_{SB} \lambda_{BN}}}\right)dv}_{\mathcal{I}_F}\nonumber\\
		&-\underbrace{\int_{0}^{\frac{\gamma_{N,th}^{s_C}}{\kappa_7}}\exp\left(-v\frac{\kappa_8}{\lambda_{SN}}-\frac{\Gamma}{\lambda_{SN}\theta\gamma}\right)\frac{2}{\lambda_{SB} \lambda_{BN}} K_0\left(2\sqrt{\frac{v}{\lambda_{SB} \lambda_{BN}}}\right)}_{\mathcal{I}_L}dv=\mathcal{I}_F-\mathcal{I}_L,\nonumber			\tag{C.5}
	\end{align}
	\setcounter{equation}{\value{MYtempeqncnt}}
	\hrulefill 
\end{figure*} 
	\section{Proof of Theorem \ref{theorem_OPF}}\label{Proof_of_OPF}
	\setcounter{equation}{0}
Apparently, when ${a_F}/{a_N}\leq\gamma_{F,th}^{s_F}$,  $P_{\text{out},F}=1$, thus we turn to the derivation when ${a_F}/{a_N}>\gamma_{F,th}^{s_F}$. Then we have
\begin{align}\label{calculation_PoutF}
&P_{\text{out},F}=\Pr\left(\gamma_{U_F}^{s_F}\leq\gamma_{F,th}^{s_F}\right)\nonumber\\
&=\Pr\left(|h_{SF}|^2\kappa_2\leq|h_{SB}|^2|h_{BF}|^2\kappa_1+\gamma_{F,th}^{s_F}\right)\nonumber\\
&\overset{(a)}{=}\Pr\left(|h_{SF}|^2\leq \nu_F\frac{\kappa_1}{\kappa_2}+\frac{\gamma_{F,th}^{s_F}}{\kappa_2}\right)\nonumber\\
&=\int_{0}^{\infty}\left[1-\exp\left(-\frac{\kappa_1}{\kappa_2\lambda_{SF}}-\frac{\gamma_{F,th}^{s_F}}{\kappa_2\lambda_{SF} }\right)\right]f_{\nu_F}(v)dv\nonumber\\
&\overset{(b)}{=}\!\!1\!\!-\!\!\frac{2\!\exp\!\left(\!-\!\mathcal{B}_0\right)}{\lambda_{SB}\lambda_{BF}}\!\!\int_{0}^{\infty}\!\!\exp\left(\!\!-\frac{\kappa_1v}{\kappa_2\lambda_{SF}}\!\!\right)\!\!K_0\!\!\left(\!\sqrt{\frac{4    v}{\lambda_{SB}\lambda_{BF}}}\!\right)\!\!dv,
\end{align}
where $\mathcal{B}_0$ has been defined in Table \ref{OP_parameters},
	\begin{subequations}
	\begin{align}	
	\kappa_1&\triangleq\beta^2\gamma\gamma_{F,th}^{s_F}\left[\theta+\eta(1-\theta)\right], \label{kappa1}\\
	\kappa_2&\triangleq\theta\gamma\left(a_F-a_N\gamma_{F,th}^{s_F}\right),\label{kappa2}
	\end{align}
\end{subequations} 
and 
$(a)$	can be arrived by $|h_{SB}|^2|h_{BF}|^2\triangleq \nu_F$; (b) follows from the  fact that \cite{zhang2019backscatter}:
\begin{align}
f_{\nu_F}(v)=\frac{2}{\lambda_{SB}\lambda_{BF}}K_0\left(2\sqrt{\frac{v}{\lambda_{SB}\lambda_{BF}}}\right).
\end{align} Finally, by applying [\citen{gradshteyn2014table}, Eq. (6.614.4)], we
arrive at (\ref{exp_of_OPF}), completing the proof.

	\section{Proof of Theorem \ref{theorem_OPN}}\label{Proof_of_theorem_OPN}
\setcounter{equation}{0}
By substituting (\ref{define_of_gammaUN_sF}) and (\ref{define_of_gammaUN_sN}) into (\ref{case_when_OPN_occur}), we can easily arrive at
\begin{align}\label{first_step_of_PoutN}
&P_{\text{out},N}=1-\Pr\left(\gamma_{U_N}^{s_F}\geq\gamma_{N,th}^{s_F},\gamma_{U_N}^{s_N}\geq\gamma_{N,th}^{s_N}\right),\nonumber\\
&=1-\Pr\left(|h_{SN}|^2\kappa_4\geq |h_{SB}|^2|h_{BN}|^2\kappa_3+\gamma_{N,th}^{s_F},\right.\nonumber\\
&\quad\quad\quad\quad\left.|h_{SN}|^2\kappa_6\geq |h_{SB}|^2|h_{BN}|^2\kappa_5+\gamma_{N,th}^{s_N}\right),
\end{align}
where
\begin{subequations}
	\begin{align}	
		\kappa_3&\triangleq\beta^2\gamma\gamma_{N,th}^{s_F}\left[\theta+\eta(1-\theta)\right], \label{kappa3}\\
		\kappa_4&\triangleq\theta\gamma\left(a_F-a_N\gamma_{N,th}^{s_F}\right),\label{kappa4}\\
		\kappa_5&\triangleq\beta^2\gamma\gamma_{N,th}^{s_N}\left[\theta+\eta(1-\theta)\right], \label{kappa5}\\
		\kappa_6&\triangleq\theta\gamma a_N.\label{kappa6}
	\end{align}
\end{subequations} 
Clearly, when $\kappa_4\leq 0$, $ \gamma_{U_N}^{s_F} $ is always less than $\gamma_{N,th}^{s_F}$, leading to $P_{\text{out},N} = 1 
$. Therefore, the following derivations are based on the assumption that $\kappa_4\geq 0$, i.e., ${a_F}/{a_N}\geq\gamma_{N,th}^{s_F}$. For convenience, we define $|h_{SB}|^2|h_{BN}|^2\triangleq \nu_N$, then (\ref{first_step_of_PoutN}) can be further simplified as
\begin{align}\label{second_step_of_PoutN}
&P_{\text{out},N}=1-\Pr\left(|h_{SN}|^2\geq \nu_N\max\left(\frac{\kappa_3}{\kappa_4},\frac{\kappa_5}{\kappa_6}\right)\right.\nonumber\\
&\quad\quad\quad\quad\quad\quad\quad\left.+\max\left(\frac{\gamma_{N,th}^{s_F}}{\kappa_4},\frac{\gamma_{N,th}^{s_N}}{\kappa_6}\right)\right).\nonumber\\
&\overset{(a)}{=}\Pr\left(|h_{SN}|^2\leq \nu_N\frac{\beta^2\Gamma\left[\theta+\eta(1-\theta)\right]}{\theta}+\frac{\Gamma}{\theta\gamma}\right),
\end{align}
where step $(a)$ involves parameter substitution according to the following two equations:
\begin{subequations}
	\begin{align}	
		&\max\left(\frac{\kappa_3}{\kappa_4},\frac{\kappa_5}{\kappa_6}\right)\triangleq\frac{\beta^2\Gamma\left[\theta+\eta(1-\theta)\right]}{\theta},\label{maxk4_k5}\\
		&\max\left(\frac{\gamma_{N,th}^{s_F}}{\kappa_4},\frac{\gamma_{N,th}^{s_N}}{\kappa_6}\right)\triangleq\frac{\Gamma}{\theta\gamma}, \label{max_gamma_k4}
	\end{align}
\end{subequations} 
in which $\Gamma$ has been defined in Table \ref{OP_parameters}.
Considering that the PDF of $\nu_N$ can be expressed as:
\begin{align}\label{PDF_of_nuN}
	f_{\nu_N}(v) = \frac{2}{\lambda_{SB} \lambda_{BN}} K_0\left(2\sqrt{\frac{v}{\lambda_{SB} \lambda_{BN}}}\right),
\end{align}
and noting that (\ref{second_step_of_PoutN}) has similar form as (\ref{calculation_PoutF}), we can employ similar computational steps outlined in Appendix \ref{Proof_of_OPF} to evaluate  (\ref{second_step_of_PoutN}). In this way, we  finally obtain (\ref{exp_of_OPN}), completing the proof.

\section{Proof of Theorem \ref{theorem_OPC}}\label{Proof_of_theorem_OPC}
\setcounter{equation}{0}
By substituting (\ref{define_of_gammaUN_sF}), (\ref{define_of_gammaUN_sN}), and (\ref{define_gammaUNSC}) into (\ref{cases_when_sc_cannot_decode}), we have
\begin{align}\label{first_step_of_Pout_C}
	&P_{\text{out},C}=1\!-\!\Pr\left(\gamma_{U_N}^{s_F}\geq\gamma_{N,th}^{s_F},\gamma_{U_N}^{s_N}\geq\gamma_{N,th}^{s_N},\gamma_{U_N}^{s_C}\geq\gamma_{N,th}^{s_C}\right),\nonumber\\
	&=1-\Pr\left(|h_{SN}|^2\kappa_4\geq |h_{SB}|^2|h_{BN}|^2\kappa_3+\gamma_{N,th}^{s_F},\right.\nonumber\\
	&\quad\quad\quad\quad\left.|h_{SN}|^2\kappa_6\geq |h_{SB}|^2|h_{BN}|^2\kappa_5+\gamma_{N,th}^{s_N},\right.\nonumber\\
	&\quad\quad\quad\quad\left.|h_{SB}|^2|h_{BN}|^2\kappa_7\geq\gamma_{N,th}^{s_C}\right),
\end{align} where $\kappa_3$, $\kappa_4$, $\kappa_5$, and $\kappa_6$ have been defined in (\ref{kappa3}) to (\ref{kappa6}), and
	\begin{align}	
		\kappa_7&\triangleq\beta^2\gamma\left[\theta-\eta(1-\theta)\gamma_{N,th}^{s_C}\right]. \label{kappa7}
	\end{align}
 
 It is straightforward that when $\kappa_4$ or $\kappa_7\leq 0$, $P_{\text{out},C} = 1 
$. Therefore, the subsequent derivations are conducted under the condition that $\kappa_4$ and $\kappa_7>0$, i.e., ${a_F}/{a_N}>\gamma_{N,th}^{s_F}$ and  $\gamma_{N,th}^{s_C}<{\theta}/{(\eta(1-\theta))}$. Then, by defining $|h_{SB}|^2|h_{BN}|^2 \triangleq \nu_N$ and utilizing (\ref{maxk4_k5}) as well as (\ref{max_gamma_k4}), we can  further simplify (\ref{first_step_of_Pout_C}) as (\ref{first_simple_of_PoutC}), shown at the top of the previous page. Please note that the PDF of $\nu_N$ has already been given in (\ref{PDF_of_nuN}).

Next, we need to find the theoretical solution for $\mathcal{Q}$. Letting 
\begin{align}	\tag{C.4}	
	\kappa_8&\triangleq\frac{\beta^2\Gamma\left[\theta+\eta(1-\theta)\right]}{\theta},\nonumber \label{kappa8}
\end{align}  	
$\mathcal{Q}$ can be rewritten as (\ref{compose_of_Q}), shown at the top of the previous page. It is evident that $\mathcal{I}_F$ can be solved through the method shown in Appendix \ref{Proof_of_theorem_OPN}.
Then, $\mathcal{I}_F$ can be calculated as
	\begin{align}\label{IF}
			\tag{C.6}
		\mathcal{I}_F=-\mathcal{A}_1\exp\left(\mathcal{A}_1\!-\!\mathcal{B}_1\right){\rm Ei}\left(-\mathcal{A}_1\right),\nonumber
	\end{align}
where $\mathcal{A}_1$ and $\mathcal{B}_1$ have been defined in Table \ref{OP_parameters}.

Then, we turn to $\mathcal{I}_L$.
However,  $\mathcal{I}_L$  involves the product of $K_0(\cdot)$ and $\exp(\cdot)$, making it challenging to obtain a closed-form expression for the integral. To evaluate this integral,  we can employ  the Gauss-Chebyshev quadrature method \cite{gradshteyn2014table}, resulting in
\begin{align}\label{IL}
		\tag{C.7}
\mathcal{I}_L\approx\mathcal{A}_2\sum_{i=1}^{n}\sqrt{1-t_i^2}\exp(-\mathcal{B}_2)K_0\left(\mathcal{C}_2\right),\nonumber
\end{align}
where $n$ represents the calculation accuracy, and $\mathcal{A}_2$,  $\mathcal{B}_2$, $\mathcal{C}_2$, as well as $t_i$ have been defined in Table \ref{OP_parameters}. Finally, by substituting  (\ref{IF}) and  (\ref{IL}) into (\ref{first_simple_of_PoutC}), we can derive the approximated expressions of $P_{\text{out},C}$, completing the proof.

\section{Proof of Theorem \ref{theorem_IPF}}\label{Proof_of_theorem_IPF}
\setcounter{equation}{0}
From (\ref{IP_define}), we observe that calculating the IP requires the CDF of $\gamma_E^{s_i}$ ($i \in \{N,F,C\}$). However, since $\gamma_E^{s_i}$ is defined in terms of $\gamma_V^{s_i}$, we must first derive the CDF of $\gamma_V^{s_i}$. Following similar steps as outlined in Appendix~\ref{Proof_of_OPF}, we can derive the CDF of $\gamma_V^{s_F}$ as follows:
\begin{align}\label{derivation_of_FgammaVsF}
	 &F_{\gamma_{V}^{s_F}}(x)= \Pr(\gamma_{V}^{s_F} \leq x) \nonumber \\
	& \overset{(a)}{=} 1 - \exp\left(-\frac{x}{\kappa_{9}\lambda_{SV}}\right)\int_{0}^{+\infty}\exp\left(-\frac{xv\beta^2\gamma}{\kappa_{9}\lambda_{SV}}\right)f_{\nu_V}(v)dv \nonumber \\
	& \overset{(b)}{=} 1 + \underbrace{\mathcal{A}_{e0}(x)\exp(\mathcal{A}_{e0}(x))\text{Ei}(-\mathcal{A}_{e0}(x))}_{\mathcal{G}_{e0}(x)}\exp(-\mathcal{B}_{e0}(x)) \nonumber \\
	& = 1 + \mathcal{G}_{e0}(x)\exp(-\mathcal{B}_{e0}(x)),
\end{align}
where 
\begin{subequations}
	\begin{align}	
		&\mathcal{A}_{e0}(x)= \frac{\kappa_{9}\lambda_{SV}}{\lambda_{SB}\lambda_{BV}x\beta^2\gamma},\\
		&\mathcal{B}_{e0}(x)= \frac{x}{\kappa_{9}\lambda_{SV}}.
	\end{align}
\end{subequations} 

In (\ref{derivation_of_FgammaVsF}), (a) is obtained by defining $|h_{SB}|^2 |h_{BV}|^2 \triangleq \nu_V$ and $\kappa_{9} = [\theta a_F - \theta a_N x - (1-\theta)x] \gamma$, while (b) is derived from the PDF of $\nu_V$, given by 
\begin{align}
	f_{\nu_V}(v) = \frac{2}{\lambda_{SB} \lambda_{BV}} K_0\left(2\sqrt{\frac{v}{\lambda_{SB}\lambda_{BV}}}\right),
\end{align}
along with [\citen{gradshteyn2014table}, Eq. (6.614.4)]. Notably, when $\kappa_{9} \leq 0$, we have $F_{\gamma_{V}^{s_F}}(x) = 1$. Therefore, we focus on the case where $\kappa_{9} > 0$ in the following derivations. Recall that $d_{SV}\approx d_{BV}$, we have $\lambda_{SV}\approx\lambda_{BV}$, then 
\begin{align}
	\mathcal{A}_{e0}(x)\approx\frac{\kappa_{9}}{\lambda_{SB}x\beta^2}.
\end{align}

Subsequently, since $\gamma_{E}^{s_F}=\max_{V\in\Phi_{\text{E}}}\{\gamma_{V}^{s_F}\}$, we can attain its CDF  as 
\begin{align}
	&F_{\gamma_E^{s_F}}(x)
	=\mathbb{E}_{\Phi_{\text{E}}}\left\{\prod\limits_{v\in\Phi_{\text{E}}\atop d_{SV}\geq r_p}F_{\gamma_{V}^{s_F}}(x)\right\}\nonumber\\
	&=\exp\left(-\lambda_e\int_{S}\left[1-F_{\gamma_{V}^{s_F}}(x)\right]dd_{SV}\right)\nonumber\\
	&\overset{(a)}{=}\exp\left(\lambda_e\mathcal{G}_{e0}(x)2\pi\int_{r_p}^{\infty}\exp(-\mathcal{M}_{e0}(x)r^{\alpha})rdr\right),
\end{align}
where (a) results from converting Cartesian coordinates to polar ones. Here,  $\mathcal{M}_{e0}(x) = {x}/{\kappa_{9}}$.
Finally, by applying [\citen{gradshteyn2014table}, Eq. (3.381.9)] along with (\ref{IP_define}) and letting \(x = \gamma_{E,th}^{s_F}\), we arrive at (\ref{IP_UF}), completing the proof.

\section{Proof of Theorem \ref{IP_BD}}\label{Proof_of_theorem_IPC}
\setcounter{equation}{0}
First of all, we should derive the CDF of $\gamma_{V}^{s_C}$. By defining $|h_{SB}|^2 |h_{BV}|^2 \triangleq \nu_V$, we can achieve
\begin{align}
	 &F_{\gamma_{V}^{s_C}}(x)=1 - \Pr(\gamma_{V}^{s_C} > x) \nonumber\\
	&= 1 - \underbrace{\Pr\left(|h_{SV}|^2 < \frac{\nu_V \delta(x)}{x(1 - \theta)} - \frac{1}{(1 - \theta) \gamma}\right)}_{\mathcal{Q}_{V}} \nonumber\\
	&= 1 - \mathcal{Q}_{V},
\end{align}
where $\delta(x) = \beta^2 [\theta - (1 - \theta)x]$. 
It is worth noting that the following conditions must be satisfied: $\nu_V > \frac{x}{\gamma \delta(x)}$ and $\delta(x)>0$;
 otherwise, the above equation will always equal to one.

Then we turn to solve $\mathcal{Q}_V$. After several mathematical operations, $\mathcal{Q}_V$ can be reformulated as
	\begin{align}\label{compose_of_QV}
	&\mathcal{Q}_{V}=\int_{\frac{x}{\gamma\delta(x)}}^{+\infty}F_{|h_{SV}|^2}(\frac{v\delta(x)}{x(1-\theta)}-\frac{1}{(1-\theta)\gamma})f_{\nu}(v)dv\nonumber\\
	&=\underbrace{\int_{0}^{+\infty}(1\!-\!\exp(-\frac{v\delta(x)}{\lambda_{SV}x(1-\theta)}\!+\!\frac{1}{\lambda_{SV}(1-\theta)\gamma}))f_{\nu}(v)dv}_{\mathcal{I}_{VF}}\nonumber\\
	&-\underbrace{\int_{0}^{\frac{x}{\gamma\delta(x)}}(1\!-\!\exp(-\frac{v\delta(x)}{\lambda_{SV}x(1-\theta)}\!+\!\frac{1}{\lambda_{SV}(1-\theta)\gamma}))f_{\nu}(v)dv}_{\mathcal{I}_{VL}}\nonumber\\
	&=\mathcal{I}_{VF}-\mathcal{I}_{VL}.
\end{align}
It is evident that $\mathcal{I}_{VF}$ can be solved through the similar method shown in Appendix \ref{Proof_of_theorem_IPF}.
\begin{align}\label{exp_IVF}
	\mathcal{I}_{VF} & = 1 \!\!-\!\! \int_{0}^{+\infty}\!\! \exp\left(-\frac{v \delta(x)}{\lambda_{SV} x (1 \!-\! \theta)}\!\! +\!\! \frac{1}{\lambda_{SV} (1 \!- \!\theta) \gamma}\right)\!\! f_{\nu_V}(v) dv \nonumber\\
	&= 1 + \underbrace{\mathcal{A}_{e2}(x) \exp(\mathcal{A}_{e2}(x)) \text{Ei}(-\mathcal{A}_{e2}(x))}_{\mathcal{G}_{e2}(x)} \exp(\mathcal{B}_{e2}(x)) \nonumber\\
	&= 1 + \mathcal{G}_{e2}(x) \exp(\mathcal{B}_{e2}(x)),
\end{align}
where 
\begin{subequations}
	\begin{align}	
		&\mathcal{A}_{e2}(x) \approx \frac{x (1 - \theta)}{\lambda_{SB}\delta(x)},\\
		&\mathcal{B}_{e2}(x) = \frac{1}{\lambda_{SV} (1 - \theta) \gamma}.
	\end{align}
\end{subequations} 

As for $\mathcal{I}_{VL}$, obtaining a closed-form solution for its current expression is highly challenging. Therefore, the Gauss-Chebyshev quadrature method acts as an excellent solution to approximate $\mathcal{I}_{VL}$ as follows:
\begin{align}\label{exp_IVL}
	\mathcal{I}_{VL} \approx \mathcal{A}_{e3} \sum_{i=1}^{n} \sqrt{1 - t_i^2} \left(1 - \exp(-\mathcal{B}_{e3})\right) K_0(\mathcal{C}_{e3}),
\end{align} 
where $n$ represents the calculation accuracy, $t_i$ has been defined in Table \ref{IP_parameters}, and
\begin{subequations}
	\begin{align}	
		&\mathcal{A}_{e3}(x)=\frac{x\pi}{2\gamma\delta(x)\lambda_{SB}\lambda_{BV}},\\
		&	\mathcal{B}_{e3}(x)=\frac{x(t_i+1)\delta(x)}{\gamma\delta(x)2\lambda_{SV}x(1-\theta)}-\frac{1}{\lambda_{SV}(1-\theta)\gamma},\\
		&\mathcal{C}_{e3}(x)=2\sqrt{\frac{x(t_i+1)}{\gamma\delta(x)2\lambda_{SB}\lambda_{BV}}}.
	\end{align}
\end{subequations} 
Then by substituting (\ref{exp_IVF}) and  (\ref{exp_IVL}) into (\ref{compose_of_QV}), we can obtain the approximated expressions of $\mathcal{Q}_V$.

Similar to the derivation of $P_{\text{int},F}$ in Appendix \ref{Proof_of_theorem_IPF}, after setting $x=\gamma_{E,th}^{s_C}$,
we can write $P_{\text{int},C}$ as
\begin{align}
	P_{\text{int},C}=1-\exp\left(-2\pi\lambda_e\underbrace{\int_{r_p}^{+\infty}\mathcal{Q}_{V}(r)rdr}_{\mathcal{I}_{\mathcal{Q}}}\right).
\end{align}

To the best of the authors' knowledge, the integral $\mathcal{I}_{\mathcal{Q}}$ is intractable. However, by setting \( l = r - r_p \), we can tightly approximated $\mathcal{I}_{\mathcal{Q}}$ by using Gaussian-Laguerre quadrature technique [\citen{abramowitz1948handbook}, Eq. (25.4.45)]:
\begin{align}
	P_{\text{int},C}\!\!=\!\!1\!\!-\!\!\exp\left(\!-2\pi\lambda_e\sum_{i=1}^{N}w_i\mathcal{Q}_{V}(l_i+r_p)(l_i+r_p)\exp(l_i)\right),
\end{align}
where  $l_i$ is the $i$-th root of the Laguerre polynomial, $L_{N}(l)$, and $w_i$ has been defined in Table \ref{IP_parameters}.
In this way, we finally arrive at (\ref{exp_of_IPC}), completing the proof.

\end{appendices}

\bibliographystyle{IEEEtran}
\bibliography{AmBC_NOMA_random.bib}
\end{document}